\renewenvironment{proof}[1][\proofname]{\noindent {\bfseries #1.}  }{\qed}
\newtheorem{theorem}{Theorem}[section]
\newtheorem{lemma}[theorem]{Lemma}
\newtheorem{definition}[theorem]{Definition}
\newtheorem{remark}[theorem]{Remark}
\newtheorem{example}[theorem]{Example}
\newtheorem{algorithm}[theorem]{Algorithm}
\renewcommand{\mathbf}{\boldsymbol}
\newcommand{\mb}{\mathbf}
\newcommand{\mc}{\mathcal}
\newcommand{\bb}{\mathbb}
\newcommand{\set}[1]{\left\{ #1 \right\}}
\newcommand{\reals}{\bb R}
\newcommand{\eps}{\varepsilon}
\newcommand{\R}{\reals}
\newcommand{ \brac }[1]{\left[ #1 \right]}
\newcommand{ \paren }[1]{ \left( #1 \right) }
\DeclareMathOperator{\trace}{trace}
\DeclareMathOperator{\diag}{diag}
\DeclareMathOperator{\vect}{vec}
\newcommand{\wh}{\widehat}
\newcommand{\wt}{\widetilde}
\newcommand{\ol}{\overline}
\newcommand{\norm}[2]{\left\| #1 \right\|_{#2}}
\newcommand{\abs}[1]{\left| #1 \right|}
\newcommand{\innerprod}[2]{\left\langle #1,  #2 \right\rangle}
\newcommand{\prob}[1]{\bb P\left[ #1 \right]}
\newcommand{\expect}[1]{\bb E\left[ #1 \right]}
\numberwithin{equation}{section}
\title{Holographic Phase Retrieval and Reference Design}
\author{David A. Barmherzig\thanks{Institute for Computational and Mathematical Engineering, Stanford University, Stanford, CA 94305, U.S.A.}
        \and Ju Sun\thanks{Department of Mathematics, Stanford University, Stanford, CA 94305, U.S.A.}
        \and Po-Nan Li\thanks{Department of Electrical Engineering, Stanford University, Stanford, CA 94305, U.S.A.}
                \and T.J. Lane\thanks{SLAC National Accelerator Laboratory, Menlo Park, CA 94025, U.S.A.}
        \and Emmanuel J. Cand\`{e}s\thanks{Department of Mathematics and Department of Statistics, Stanford University, Stanford, CA 94305, U.S.A.}
}
\date{}
\date{  \quad Revised: \today}
\date{\today}
\begin{document}
\maketitle

\begin{abstract}
A general mathematical framework and recovery algorithm is presented for the holographic phase retrieval problem. In this problem, which arises in holographic coherent diffraction imaging, a ``reference'' portion of the signal to be recovered via phase retrieval is a priori known from experimental design. A generic formula is also derived for the expected recovery error when the measurement data is corrupted by Poisson shot noise. This facilitates an optimization perspective towards reference design and analysis. We employ this optimization perspective towards quantifying the performance of various reference choices.
\end{abstract}

\section{Introduction}

\subsection{Phase Retrieval and Coherent Diffraction Imaging}
The phase retrieval problem concerns recovering a signal from the squared magnitude of its Fourier transform. The problem can be stated symbolically as
\begin{align}  \label{eq:pr_symbol}
\begin{split}
&\textbf{Given} \quad \big{|}\wh{X}(\omega)\big{|}^2 \doteq \abs{\int_{t \in T} X(t)e^{-i \omega t}}^2\quad \text{for}\; \omega \in \Omega \\
&\textbf{Recover} \quad X
\end{split},
\end{align}
where $T$ and $\Omega$ are the (possibly multidimensional) domains of the signal and its Fourier transform, respectively. Phase retrieval arises ubiquitously in scientific imaging, where one seeks to ``image" or determine the structure of an object from various phaseless data measurements. Such settings include crystallography~\cite{PR-eg-crystallography}, diffraction imaging~\cite{PR-eg-imaging}, optics~\cite{PR-eg-optics}, and astronomy~\cite{PR-eg-astronomy}.

Phase retrieval has gained enormous attention over the last two decades, largely due to an emerging imaging technique known as Coherent Diffraction Imaging, or CDI~\cite{CDI-orig} (illustrated in~\cref{CDI}). In CDI, a coherent beam source, often being an X-ray, is illuminated upon a sample of interest. Upon the beam reaching the sample, diffraction occurs and secondary electromagnetic waves are emitted which travel until reaching a far-field detector. The detector measures the photon flux and hence records the resulting diffraction pattern, which is approximately proportional to the squared magnitude of the Fourier transform of the electric field of the sample. One can, in principle, recover the structure of the sample from the diffraction pattern by solving the phase retrieval problem~\cite{Goodman,jaganthan-thesis,eldar-review}. With the advent of extremely powerful X-ray light sources, such as X-ray Free-Electron Lasers (XFELs)~\cite{CDI-XFEL} and synchrotron radiation~\cite{CDI-synchotron}, CDI is pushing the frontier of high-resolution imaging of biological and material specimens at the nanoscale~\cite{CDI-nano-3,CDI-nano-2, CDI-nano-1,Miao530}.

\begin{figure}[!htbp] \label{CDI}
    \centering
        \includegraphics[width=0.7\textwidth]{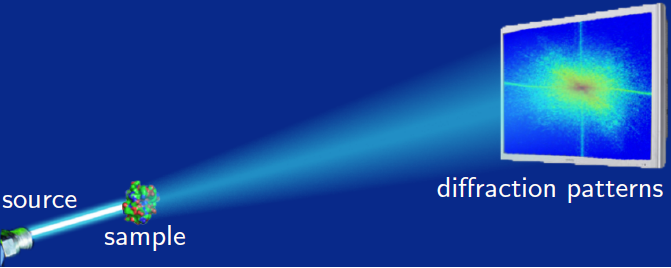}
        \caption{CDI setup. Image courtesy of~\cite{Candes-PL-Masks}.}
    \label{CDI}
\end{figure}

\subsection{Phase Retrieval Algorithms}
The phase retrieval problem does not admit a unique solution, as the forward mapping in \cref{eq:pr_symbol} maps signals related by certain intrinsic symmetries to the same set of measurements (these are discussed in detail in~\cref{sec:pr_detail}). Also, modulo these unavoidable ambiguities, there still may not be a unique solution. This nonuniqueness occurs frequently for one-dimensional signals, but only on a set of (Lebesgue) measure zero for two- or higher-dimensional signals~\cite{Hayes,eldar-review,Bendory2017,NP}. Thus, for CDI experiments (which concern two- or three-dimensional signals), there is almost surely a unique solution up to the intrinsic ambiguities. Nevertheless, solving the problem is equivalent to solving a quadratic system---which is well known to be NP-hard~\cite{Ben-TalNemirovski2001Lectures}.

In practice, the phase retrieval problem is often cast as a nonconvex optimization problem, for which various alternating-projection type algorithms are commonly employed.
The most notable one is Fienup's Hybrid Input-Output (HIO) algorithm~\cite{HIO}. Other practical variants include Relaxed Averaged Alternating Reflections (RAAR)~\cite{RAAR}, Difference Map~\cite{DifferenceMap}, and Alternating Direction Method of Multipliers (ADMM)~\cite{MarchesiniADMM,ADMM-Boyd,PR-BCD}. While often successful, these algorithms are not guaranteed to find the correct solution. They are also known to suffer from various problems such as stagnation at erroneous solutions, slow runtime, sensitivity to noise and parameter tuning~\cite{marchesini2007invited,Elser}.

To mitigate these difficulties, a line of recent work has proposed to modify the typical CDI setup. This involves sequentially modulating either the beam pattern or the Fourier transform via random or deterministic masks, thereby gathering multiple-shot measurements~\cite{PhysRevA.78.023817,PhysRevLett.100.155503,Candes-PL-Masks,deter_mask,jaganathan2015phase}. Several of these proposals have resulted in efficient algorithms with provable guarantees~\cite{Candes-PL-Masks,Candes-WF,deter_mask}. However, such a multiple-shot experiment is largely impractical, as the specimen could be damaged before the measurement process is complete~\cite{eldar-review}.

\subsection{Holographic CDI and Holographic Phase Retrieval}
In this paper, we consider another variant of CDI based on the holographic idea introduced by Gabor in 1948~\cite{Gabor1948}, which we shall term as \emph{holographic CDI}. In holographic CDI, the experiment remains single-shot, but a ``reference'' area, whose structure is a priori known, is included in the diffraction area alongside the sample of interest (see~\cref{FH-CDI} for the system setup and~\cref{fig:img-ref} for a schematic illustration).
\begin{figure}[!htbp] \label{FH-CDI}
    \centering
        \includegraphics[width=0.7\textwidth]{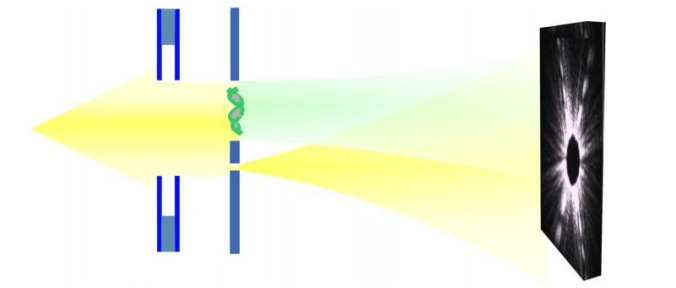}
        \caption{Holographic CDI setup. Image courtesy of~\cite{FT-Cambridge}. }
    \label{FH-CDI}
\end{figure}
\begin{figure}[!htbp] \label{fig:img-ref}
    \centering
    \includegraphics[width=0.5\textwidth]{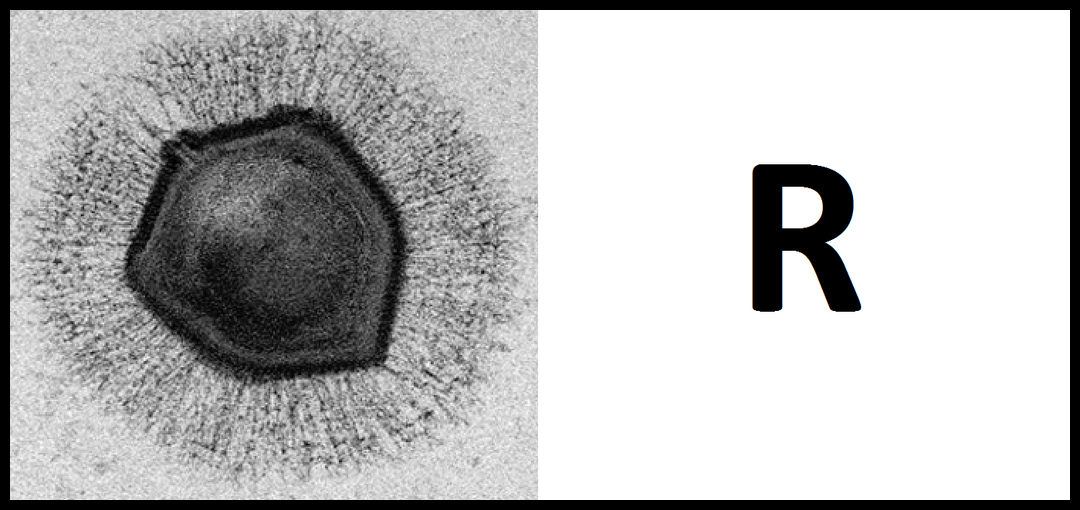}
    \caption{The diffraction area in Holographic CDI contains an unknown specimen (here shown as a mimivirus ~\cite{Mimivirus}) together with a known reference (here shown as ``$\mathbf{\mathrm{R}}$''). Popular choices for the reference $\mathbf{\mathrm{R}}$ are shown in \cref{fig:ref-compar}. }
    \label{fig:img-ref}
\end{figure}

Introducing a reference substantially simplifies the resulting phase retrieval problem, which we call \emph{holographic phase retrieval}: the computational problem is now a linear deconvolution, which is equivalent to solving a linear system\textcolor{red}~\cite{Ref-CDI-1,REF-CDI-2,HERALDO,General-Block-Ref,Kim:90}. The entailing computation can be further streamlined when certain specific reference shapes are employed. Due to its simplicity, holographic CDI is growing in its impact and popularity~\cite{Tais-annulus,FT-Cambridge}.

In the imaging community, popular reference choices are the pinhole reference~\cite{Four-holog,Kim:90}, the slit reference~\cite{HERALDO,HERALDO-2,HERALDO-real}, and the block reference~\cite{Block-ref,Block-ref-shortpaper,General-Block-Ref}, as illustrated in~\cref{fig:ref-compar}.
\begin{figure}[!htbp] \label{fig:ref-compar}
    \centering
    \begin{subfigure}[b]{0.2\textwidth}
        \includegraphics[width=\textwidth]{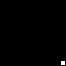}
        \caption{Pinhole reference}
        \label{holog-compar}
    \end{subfigure}
    \begin{subfigure}[b]{0.2\textwidth}
        \includegraphics[width=\textwidth]{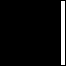}
        \caption{Slit reference}
        \label{HERALDO-compar}
    \end{subfigure}
            \begin{subfigure}[b]{0.2\textwidth}
        \includegraphics[width=\textwidth]{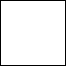}
        \caption{Block reference}
        \label{block-compar}
    \end{subfigure}
    \caption{Schematic of the three leading reference choices for Holographic CDI. (Images have $16 \times 16$ enlarged pixels for illustration.)}
    \label{fig:ref-compar}
\end{figure}
Other proposed references include L-shapes~\cite{HERALDO}, parallelograms~\cite{HERALDO}, and annuluses~\cite{Tais-annulus}, and Uniformly Redundant Arrays \cite{URA}. These reference shapes are typically realized as ``empty space'' cut out from a surrounding metal apparatus (see, e.g., \cref{FH-CDI}). For signal recovery using these references, reference-specific algorithms---which take a different approach than linear deconvolution and only apply to small classes of references---have been proposed~\cite{HERALDO}. Moreover, studies of these methods have to date been almost entirely empirical. Some error analysis is provided in~\cite{CDI-stats}.

%
%
%
%
%

\subsection{Our Contributions}
In this paper, we derive a general mathematical framework for holographic phase retrieval, encompassing the problem's setup, recovery algorithm, and error analysis.
\begin{itemize}
\item Firstly, we formulate the holographic phase retrieval problem for a general specimen and reference setup. We then provide a recovery algorithm, termed Referenced Deconvolution, which essentially amounts to solving a structured linear system. We then further show how particular reference choices simplify this linear system. This provides a novel perspective on why fast, specialized algorithms (e.g., see~\cite{Four-holog,Block-ref,HERALDO}) can be designed for these reference choices.

\item We derive a formula for the expected recovery error given noise-corrupted data. This formula offers a quantitative metric for experimental design and simulation, and allows for viewing the problem of reference design from an optimization perspective. This  formula is then specialized to the Poisson shot noise, which occurs intrinsically in CDI due to quantum mechanical principles. This leads to the key notion of the \textit{reference scaling factor}, based on which we characterize the popular references. In particular, the pinhole reference (\cref{holog-compar}) is a good choice for ``flat-spectrum'' data, whereas the block reference (\cref{block-compar}) is well suited for low-frequency dominant data.
\end{itemize}

Numerical results demonstrate the power of the proposed referenced deconvolution method and the advantage of the block reference for recovering typical CDI imaging specimens. We also view this work as a means to introduce the holographic phase retrieval problem and the optimal reference design problem to a wider mathematical and scientific audience.

\subsection{Paper Organization}
\cref{sec:alg} introduces the holographic phase retrieval problem and the referenced deconvolution algorthm. The special cases of popular reference choices are then further studied. \cref{sec:analysis} introduces an error analysis framework for holographic phase retrieval and the referenced deconvolution algorithm. This is further specialized to Poisson shot noise. The notion of a \textit{reference scaling factor} is introduced, and is shown to play a key role in analyzing the expected Poisson noise error resulting from a given reference choice. Specific error analysis is then provided for popular reference choices, and is used to compare their performance. \cref{sec:exp} presents the results of numerical simulations.

\section{Holographic Phase Retrieval and Referenced Deconvolution} \label{sec:alg}

The phase retrieval problem is introduced in  \cref{sec:pr_detail}. The holographic phase retrieval problem and the Referenced Deconvolution algorithm are then introduced in \cref{sec:holo_pr} and \cref{sec:ref_deconv}, respectively. Then, the algorithm and resulting linear system are specialized to the three popular reference choices in \cref{subsec:special-cases}.

\subsection{The Phase Retrieval Problem} \label{sec:pr_detail}
We consider the discrete two-dimensional phase retrieval problem. This discrete setting is manifested in practical CDI experiments, since CCD detectors can only take measurements at a finite number of pixel locations.

For a signal $X \in \mathbb{C}^{n_1 \times n_2}$, let $\wh{X}$ be the size $m_1 \times m_2$ discrete Fourier transform of $X$ given by
\begin{equation} \label{eqn:FT}
\wh{X}(k_1,k_2)=\sum_{t_1 = 0}^{n_1-1} \sum_{t_2=0}^{n_2-1} X(t_1,t_2)e^{-2\pi i(t_1k_1/m_1+t_2k_2/m_2)}, \quad k_1 \in \set{0, \dots, m_1-1}, k_2 \in \{0, \dots, m_2-1\}.
\end{equation}
Whenever $m_1 > n_1$ and $m_2 > n_2$, the Fourier transform is injective and is said to be \emph{oversampled}. The mapping can also be compactly expressed as matrix multiplication:
\begin{equation} \label{eqn:FT-mtrx-mult}
\wh{X}=F_L X F_R^T,
\end{equation}
where $F_L \in \bb C^{m_1 \times n_2}$ and $F_R \in \bb C^{m_2 \times n_2}$ are the corresponding discrete Fourier transform (DFT) matrices given by
\begin{align*}
F_L(k,t)=e^{-2 \pi i k t/m_1}    &\quad \text{for}\; \paren{k, t} \in \{0,\dots, m_1-1\} \times \{0, \dots, n_1-1\}, \\
F_R(k,t)=e^{-2 \pi i k t/m_2}    &\quad \text{for}\; \paren{k, t} \in  \{0,\dots, m_2-1\} \times \{0, \dots, n_2-1\}.
\end{align*}
When $m_1 \geq n_2$ and $m_2 \ge n_2$, both $F_L$ and $F_R$ have mutually orthogonal columns, and so the inverse mapping is given simply by
\begin{equation} \label{eqn:inv-FT-mtrx-mult}
X=\frac{1}{m_1 m_2}F_L^*\wh{X}(F_R^*)^T.
\end{equation}
The forward and inverse transforms in \cref{eqn:FT-mtrx-mult,eqn:inv-FT-mtrx-mult} can also be conveniently expressed as a single matrix multiplication based on matrix Kronecker products. Let $\vect(\cdot)$ be the columnwise vectorization operator acting on matrices and $\otimes$ denote the matrix Kronecker products. The following result can be directly verified:
\begin{lemma} \label{lem:mtrx-kron-prod}
$Y = AXB \Longleftrightarrow \vect(Y)=(B^T \otimes A)\vect(X)$.
\end{lemma}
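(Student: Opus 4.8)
The plan is to exploit the fact that both sides of the claimed equivalence are linear in $X$, and to reduce the verification to a single rank-one building block. The map $X \mapsto \vect(AXB)$ and the map $X \mapsto (B^T \otimes A)\vect(X)$ are each linear in $X$, so it suffices to check the identity on a spanning set of matrices. For this I would take the rank-one matrices $E_{kl} = e_k e_l^T$, where $e_k$ and $e_l$ are standard basis vectors of the appropriate dimensions; if the two sides agree on every $E_{kl}$, linearity forces agreement for all $X$. The reverse implication is then immediate, since $\vect$ is a bijection and so equality of the vectorized forms is equivalent to $Y = AXB$.

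The key algebraic tools are the elementary identity $\vect(uv^T) = v \otimes u$ for column vectors $u, v$, together with the mixed-product rule $(P \otimes Q)(R \otimes S) = (PR) \otimes (QS)$. For the left-hand side I would compute $A E_{kl} B = (Ae_k)(e_l^T B) = (Ae_k)(B^T e_l)^T$, whence $\vect(A E_{kl} B) = (B^T e_l) \otimes (Ae_k)$. For the right-hand side, the same vectorization identity gives $\vect(E_{kl}) = e_l \otimes e_k$, and then the mixed-product rule yields $(B^T \otimes A)(e_l \otimes e_k) = (B^T e_l) \otimes (Ae_k)$. The two expressions coincide, which establishes the identity on each $E_{kl}$ and hence, by linearity, in general.

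As an alternative that sidesteps the auxiliary identities, I would verify the equality entrywise: writing $Y_{ij} = \sum_{k,l} A_{ik} X_{kl} B_{lj}$ and reading off the $(i,j)$ coordinate of $\vect(Y)$ against the block structure of $B^T \otimes A$, whose $(j,l)$ block equals $(B^T)_{jl} A = B_{lj} A$, produces exactly the same double sum. The only point requiring care---and the single place where bookkeeping could slip---is the column-stacking convention for $\vect$ and the matching block indexing of the Kronecker product: once one fixes that entry $(i,j)$ of an $m \times q$ matrix $Y$ sits at position $(j-1)m + i$ of $\vect(Y)$, everything matches term by term. There is no genuine obstacle beyond this index-tracking, which is precisely why the statement can be flagged as directly verifiable.
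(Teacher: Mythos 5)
Your proposal is correct. Note that the paper offers no proof at all for this lemma---it is stated with the remark that it ``can be directly verified''---so there is no argument in the paper to compare against; your write-up simply supplies the verification the authors left implicit. Of your two routes, the entrywise check (matching $Y_{ij} = \sum_{k,l} A_{ik} X_{kl} B_{lj}$ against the $(j,l)$ block $B_{lj}A$ of $B^T \otimes A$ under the column-stacking convention) is precisely what ``directly verified'' means here, while your primary route---reducing by linearity to the rank-one matrices $E_{kl}$ and invoking $\vect(uv^T) = v \otimes u$ together with the mixed-product rule $(P \otimes Q)(R \otimes S) = (PR) \otimes (QS)$---is a cleaner structural argument that isolates the index bookkeeping into two reusable identities. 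Both are complete; the reverse implication via injectivity of $\vect$ is handled correctly as well.
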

Applying \cref{lem:mtrx-kron-prod} to \cref{eqn:FT-mtrx-mult,eqn:inv-FT-mtrx-mult} gives
\begin{align} \label{eqn:FT-kron}
\vect(\wh{X})& =(F_R \otimes F_L)\vect(X), \\
\vect(X) & =\frac{1}{m^2}(F_R^* \otimes F_L^*)\vect(\wh{X}), \quad \text{when}\; m_1 \ge n_1\; \text{and}\; m_2 \ge n_2.
\end{align}

We are now ready to define the phase retrieval problem in the two-dimensional, discrete setting.
\begin{definition}
The (Fourier) \textbf{phase retrieval} problem consists of recovering a signal $X \in \mathbb{C}^{n_1 \times n_2}$ given the squared magnitudes of its Fourier transform values, i.e. given the set of values $|\wh{X}(k_1,k_2)|^2, k_1 \in \{0,\hdots, m_1-1\}, k_2 \in \{0,\hdots, m_2-1\}$, which shall be denoted as $|\wh{X}(k_1,k_2)|^2$.
\end{definition}
Here, exact recovery is not possible, as the mapping $X \mapsto |\wh{X}|^2$ is not injective due to the following intrinsic ambiguities:
\begin{enumerate}
\item Global phase shift: if $X \mapsto |\wh{X}|^2$, then $e^{i \theta}X \mapsto |\wh{X}|^2$ for any $\theta \in [0, 2\pi)$;
\item Conjugate-flipping: if $X \mapsto |\wh{X}|^2$, then $X' \mapsto |\wh{X}|^2$ for $X' \in \bb C^{n_1 \times n_2}$ with $ \ol{X(n_1-1-t_1, n_2-1-t_2)} = X'(t_1, t_2)$ for all $t_1, t_2$.
\end{enumerate}
As well, circular shifts of $X$ also produce the same set of measurements if no nonzero entries are shifted past the signal domain boundaries. This gives a third intrinsic ambiguity for signals $X$ which have zero rows or columns at their boundaries. Taking all possible compositions of these operations forms a set (in fact, an equivalence class) of signals which are ``physically equivalent'' to $X$ and have exactly the same magnitude measurements $|\wh{X}|^2$. Thus, recovering $X$ from $|\wh{X}|^2$ shall be understood as recovery up to these symmetries.

\begin{definition}  \label{def:cross_corr}
For signals $X_1,X_2 \in \mathbb{C}^{n_1 \times n_2}$ both indexed over $\set{0, 1, \dots, n_1-1} \times \set{0, 1, \dots, n_2-1}$, the \emph{cross-correlation}  $C_{[X_1,X_2]} \in \mathbb{C}^{(2n_1-1)\times(2n_2-1)}$ between $X_1$ and $X_2$ is defined as
\begin{equation} \label{eqn:autocorr}
C_{[X_1,X_2]}(s_1,s_2) = \sum_{t_1 = 0}^{n_1-1}
\sum_{t_2=0}^{n_2-1} X_1(t_1,t_2)\overline{X_2(t_1-s_1,t_2-s_2)},
\end{equation}
for $\paren{s_1,s_2} \in \{-(n_1-1),\dots,n_1-1\} \times \set{-\paren{n_2-1}, \dots, n_2-1}$, where any $X_1(t_1, t_2)$ or $X_2(t_1-s_1, t_2 - s_2)$ in the summands that are outside the valid index range are taken as $0$.
\end{definition}
When $X_1$ and $X_2$ are both equal to the same signal $X$, $C_{[X_1,X_2]}$ is known as the \textit{autocorrelation} of $X$, and is denoted by $A_X$. Let $\wh{X}$ be the size $m_1 \times m_2$ Fourier transform of $X$. It is well-known that~\cite{Oppenheim}
\begin{align}
|\wh{X}|^2=\wh{A_X},
\end{align}
where
\begin{align} \label{eq:center_fft}
\begin{split}
\wh{A_X} = F_{LA} A_X F_{RA}^T \quad \text{with} \; F_{LA} \in \mathbb{C}^{m_1 \times (2n_1-1)}, m_1 \ge 2n_1-1 \quad F_{RA} \in \mathbb{C}^{m_2 \times (2n_2-1)}, m_2 \ge 2n_2-1\\
 \text{and}\; F_{LA}(k,t)=e^{-2 \pi i k t/m_1}\; \forall\; (k, t) \in \{0,\dots, m_1-1\} \times \{-(n_1-1), \dots, n_1-1\}, \\
 \text{and}\; F_{RA}(k,t)=e^{-2 \pi i k t/m_2}\; \forall\; (k, t) \in \{0,\dots, m_2-1\} \times \{-(n_2-1), \dots, n_2-1\}.
 \end{split}
\end{align}
Moreover, if $m_1\ge 2n_1-1$ and $m_2 \ge 2n_2-1$, the mapping $A_X \mapsto \wh{A_X}$ is injective, and hence $A_X$ can be recovered from $\wh{A_X}$, or equivalently $|\wh{X}|^2$, by the corresponding inverse transform:
\begin{align}
A_X=\frac{1}{m_1 m_2}F_{LA}^*|\wh{X}|^2(F_{RA}^*)^T.
\end{align}
Namely, $A_X$ is uniquely determined from the $m_1 \ge 2n_1-1, m_2 \ge 2n_2-1$ uniform frequency sampling points. The entire frequency spectrum is in turn determined by taking the 2D discrete-time Fourier transform of $A_X$. Thus, any oversampling past the $m_1=2n_1-1, m_2= 2n_2-1$ threshold provides no additional information.\footnote{This is true at least when there is no noise.} This is in some sense the phase retrieval analogue of the Shannon sampling theorem~\cite{Candes-PhaseLift}.

A powerful result by Hayes~\cite{Hayes} establishes that for all two- or higher-dimensional signals, excluding a set of Lebesgue measure zero, the only transformations on $X$ which preserve $|\wh{X}|^2$ are the physically equivalent symmetries discussed above. Thus, phase retrieval is generally well-posed in two or higher dimensions.

\subsection{Holographic Phase Retrieval and Deconvolution}   \label{sec:holo_pr}
For simplicity of exposition, henceforth we focus on square $X \in \bb C^{n \times n}$. Suppose a reference $R \in \bb C^{n \times n}$ of the same size is situated on the right next to $X$, i.e., as illustrated in \cref{fig:img-ref}. This gives $[X, R] \in \bb C^{n \times 2n}$ defined by
\begin{equation*}
[X,R](t_1,t_2) =
		\begin{cases}
			X(t_1,t_2) & \quad t_2 \in \{0,\dots, n-1\} \\
			R(t_1,t_2) & \quad t_2 \in \{n,\dots, 2n-1\}.
		\end{cases}
\end{equation*}
We may assume without loss of generality that the magnitudes of the entries of $X$ and $R$ are within the interval $[0,1]$. This convention has the physical interpretation of indicating the (average) \textit{transmission coefficient} of the specimen at each pixel location. Roughly speaking, the transmission coefficient measures the fraction of incident electromagnetic radiation that is transmitted, rather than being reflected or absorbed and is a material property---\cref{fig:transm} shows the transmission coefficient of polycarbonate at different photon energy levels. We shall subsequently consider reference setups that are physically realized as shapes cut from a surrounding opaque apparatus (i.e. consisting of metal that blocks all incident radiation). For the opaque part, the transmission coefficient is $0$, whereas for the cut-out part the coefficient is $1$.

\begin{figure}[!htbp] \label{fig:transm}
    \centering
    \includegraphics[width=0.5\textwidth]{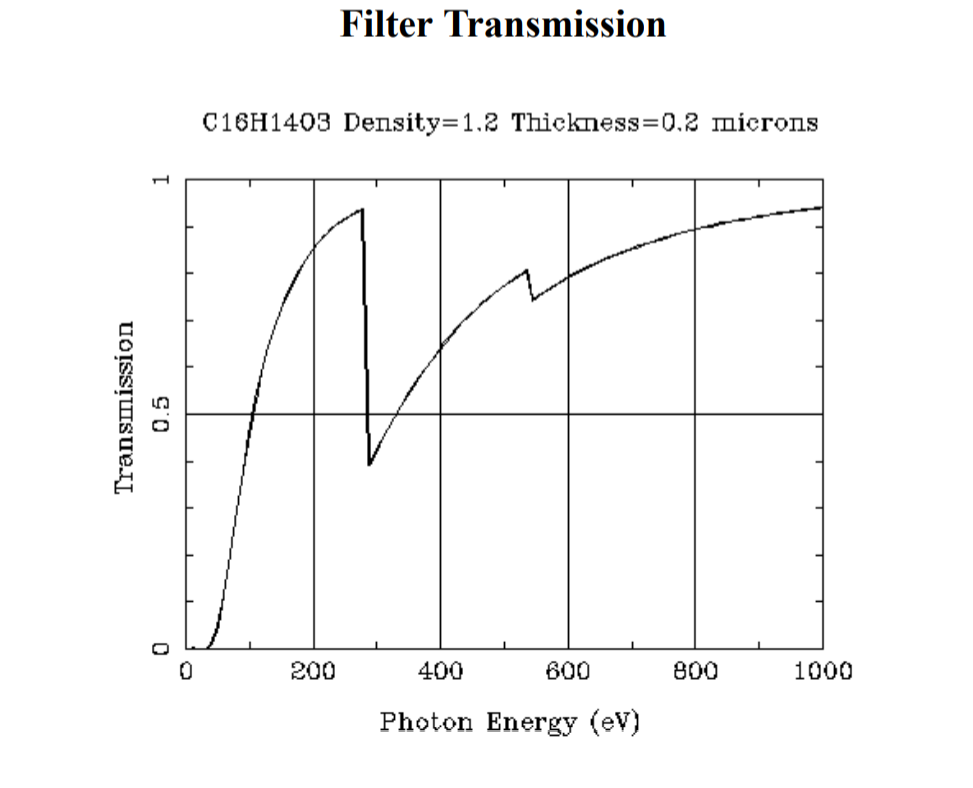}
        \caption{Transmission coefficient of polycarbonate at different photon energies ~\cite{transm}.}
        \label{fig:transm}
\end{figure}

\begin{figure}[!htbp] \label{fig:autocorr-overlap}
    \centering
    \begin{subfigure}[b]{0.28\textwidth}
        \includegraphics[width=\textwidth]{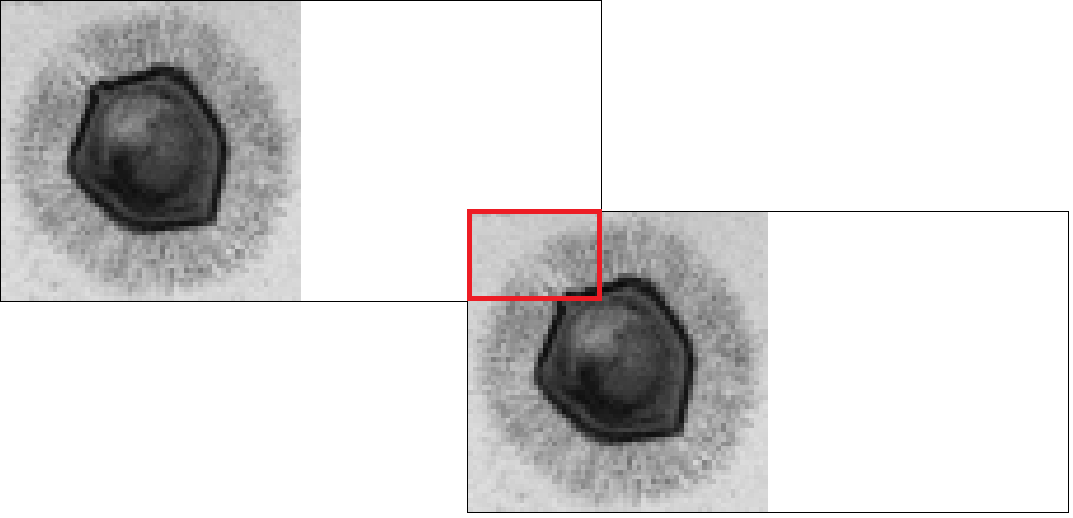}
        \caption{}
        \label{autocorr-1}
    \end{subfigure}
        \begin{subfigure}[b]{0.28\textwidth}
        \includegraphics[width=\textwidth]{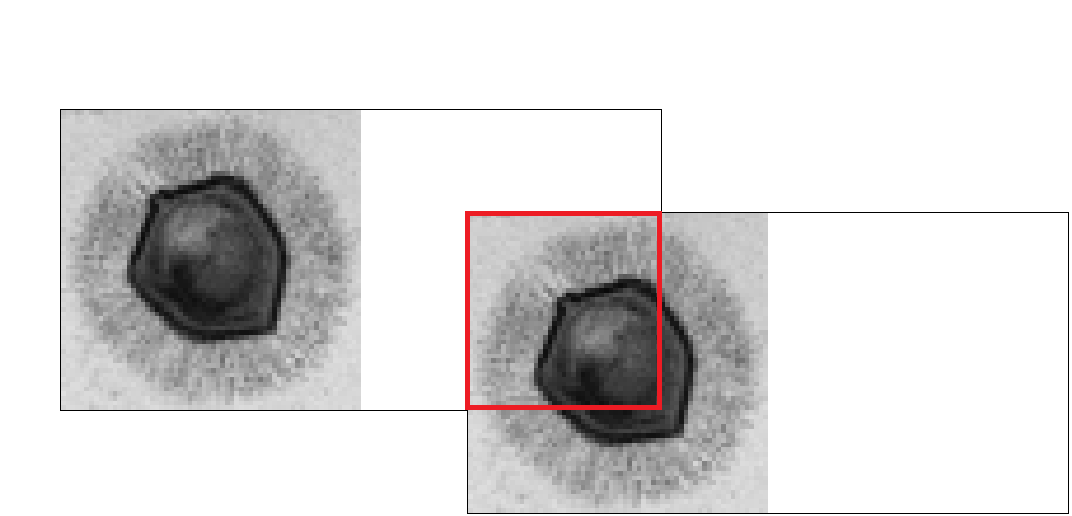}
        \caption{}
        \label{autocorr-2}
    \end{subfigure}
    \begin{subfigure}[b]{0.28\textwidth}
        \includegraphics[width=\textwidth]{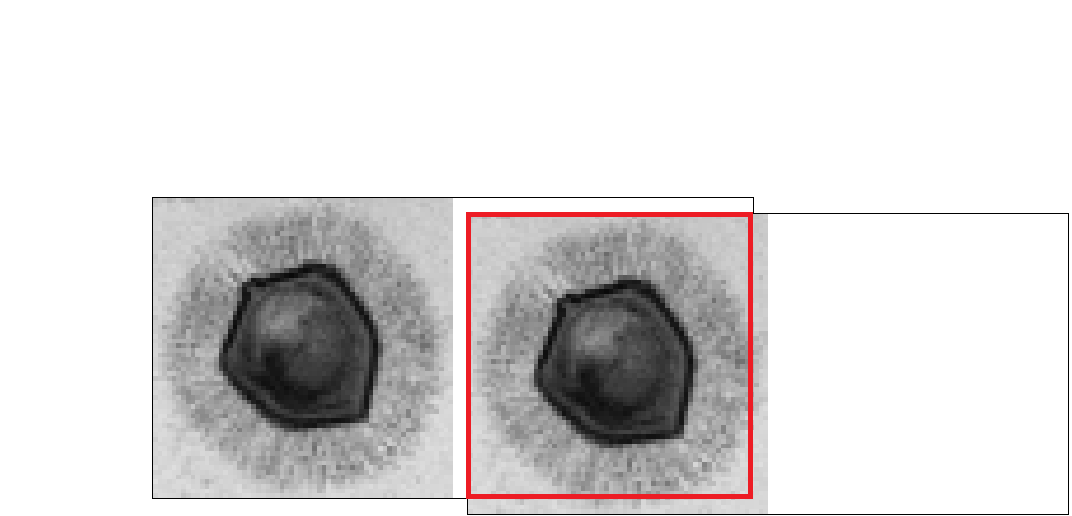}
        \caption{}
        \label{autocorr-3}
    \end{subfigure}
    \caption{Near the boundary of the index range, each successive autocorrelation value of $[X,R]$ is formed by the overlap of a subregion of $X$ with a subregion of $R$, which effectively gives a new linear measurement of $X$, as $R$ is known. The original image $X$ can then be recovered via solving the linear system in \cref{eqn:deconv}.}
    \label{fig:autocorr-overlap}
\end{figure}
In the above setting, the \emph{holographic phase retrieval} problem is:
\begin{quote}
\textbf{Holographic phase retrieval}: given $R \in \mathbb{C}^{n \times n}$ and $|\wh{\brac{X, R}}|^2 \in \R^{m_1 \times m_2}$, recover $X \in \bb C^{n \times n}$.
\end{quote}
Henceforth we assume $m_1 = m_2 = m$ with $m \ge 4n-1$. We can then recover the autocorrelation of $\brac{X, R}$, $A_{\brac{X, R}}$, by taking the inverse Fourier transform on $|\wh{\brac{X, R}}|^2$. To see how the reference $R$ helps to significantly simplify the subsequent retrieval problem, we can think over the autocorrelation process, referring to \cref{fig:autocorr-overlap}: in obtaining the autocorrelation sequence, we fix one copy of $[X, R]$, move around (in $\R^2$) another copy, and calculate and record the inner product of the overlapped region (if any) each time. When the overlap covers only part of $R$ in one copy and only part of $X$ in the other, the inner product value can be considered as a linear measurement of $X$. Since $X$ contains only $n^2$ free variables, $n^2$ non-degenerate linear measurements provide sufficient information for recovering $X$. From \cref{fig:autocorr-overlap}, we can gather $n^2$ such measurements by taking one quadrant of the cross-correlation between $X$ and $R$, which is one segment of $A_{\brac{X, R}}$! Recovering $X$ from the cross-correlation is a linear deconvolution problem.


Extracting the desired cross-correlation from $A_{[X, R]}$ is as follows. For $\paren{s_1, s_2} \in \set{-(n-1), \dots, 0} \times \set{-(n-1), \dots, 0}$,
\begin{align} \label{eq:cross_corr_XR}
C_{\brac{X, R}}\paren{s_1, s_2}
& = \sum_{t_1=0}^{n-1} \sum_{t_2 = 0}^{n-1} X\paren{t_1, t_2} \ol{R\paren{t_1-s_1, t_2 - s_2}} \nonumber \\
& = \sum_{t_1=0}^{n-1} \sum_{t_2 = 0}^{n-1} \brac{X, R}\paren{t_1, t_2} \ol{\brac{X, R}\paren{t_1-s_1, t_2 + n - s_2}} \nonumber \\
& = \sum_{t_1=0}^{n-1} \sum_{t_2 = 0}^{2n-1} \brac{X, R}\paren{t_1, t_2} \ol{\brac{X, R}\paren{t_1-s_1, t_2 + n- s_2}}  \quad (\text{``zero-filling" rule in \cref{def:cross_corr}})  \nonumber \\
& = A_{[X, R]} \paren{s_1, -n+s_2}.
\end{align}
Since this is only a quadrant of the whole cross-correlation $C_{\brac{X, R}}$, we shall write this part as $C^\diamond_{\brac{X, R}}$. The above correspondence can be compactly written as
\begin{equation} \label{eqn:bottom-right}
C^\diamond_{[X,R]}= P_1A_{[X,R]}P_2^T,
\end{equation}
where $P_1 = [I_n, 0_{n \times \paren{n-1}}]$ and $P_2 = [I_n, 0_{n \times \paren{3n-1}}]$.

For a fixed $R$, $C^\diamond_{[X,R]}$ is clearly linear in $X$. This linear relationship can be expressed conveniently as
\begin{equation} \label{eqn:deconv}
\vect(C^\diamond_{[X,R]})=M_R\vect(X),
\end{equation}
for a corresponding matrix $M_R \in \mathbb{R}^{n^2 \times n^2}$, which can be constructed by inspection of \cref{eq:cross_corr_XR}. It is easy to verify that for any choice of $R$, $M_R$ is lower-triangular and block-Toeplitz. We illustrate the form of $M_R$ using a simple example.

\begin{example} \label{eg:M-R}
Suppose
\begin{align*}
R &=
\begin{bmatrix}
    r_{00} & r_{01} & r_{02} \\
    r_{10} & r_{11} & r_{12} \\
    r_{20} & r_{21} & r_{22}
\end{bmatrix},
\end{align*}

then
\[
M_R=
\left[
\begin{array}{c|c|c}
\begin{matrix}
\overline{r_{22}} & 0 & 0 \\
    \overline{r_{12}} & \overline{r_{22}} & 0 \\
    \overline{r_{02}} & \overline{r_{12}} & \overline{r_{22}}
\end{matrix} & \begin{matrix}
0 & 0 & 0 \\
    0 & 0 & 0 \\
    0 & 0 & 0
\end{matrix} & \begin{matrix}
0 & 0 & 0 \\
    0 & 0 & 0 \\
    0 & 0 & 0
\end{matrix}\\
\hline
\begin{matrix}
\overline{r_{21}} & 0 & 0 \\
    \overline{r_{11}} & \overline{r_{21}} & 0 \\
    \overline{r_{01}} & \overline{r_{11}} & \overline{r_{22}}
\end{matrix} & \begin{matrix}
\overline{r_{22}} & 0 & 0 \\
    \overline{r_{12}} & \overline{r_{22}} & 0 \\
    \overline{r_{02}} & \overline{r_{12}} & \overline{r_{22}}
\end{matrix} & \begin{matrix}
0 & 0 & 0 \\
    0 & 0 & 0 \\
    0 & 0 & 0
\end{matrix} \\
\hline
\begin{matrix}
\overline{r_{20}} & 0 & 0 \\
    \overline{r_{10}} & \overline{r_{20}} & 0 \\
    \overline{r_{00}} & \overline{r_{10}} & \overline{r_{20}}
\end{matrix} & \begin{matrix}
\overline{r_{21}} & 0 & 0 \\
    \overline{r_{11}} & \overline{r_{21}} & 0 \\
    \overline{r_{01}} & \overline{r_{11}} & \overline{r_{21}}
\end{matrix} & \begin{matrix}
\overline{r_{22}} & 0 & 0 \\
    \overline{r_{12}} & \overline{r_{22}} & 0 \\
    \overline{r_{02}} & \overline{r_{12}} & \overline{r_{22}}
\end{matrix}
\end{array}
\right].
\]
\end{example}
Note that $M_R$ is invertible if and only if $R(n-1,n-1) \neq 0$. This invertibility condition is equivalent to the well-known ``holographic separation condition'' ~\cite{HERALDO}, dictating when an image is recoverable via a reference object. Geometrically, it guarantees that there is no aliasing corrupting the cross-correlation.

Provided that $m \ge 4n-1$ and $R(n-1,n-1) \neq 0$, we then have
\begin{align} \label{eqn:fundamental}
\begin{split}
\vect(X)
&=M_R^{-1}\vect(C^\diamond_{[X,R]})\\
&=M_R^{-1}\vect(P_1A_{[X,R]}P_2^T)\\
&=\frac{1}{m^2}M_R^{-1}\vect(P_1F_{LA}^*\big{|}\wh{[X,R]}\big{|}^2(F_{RA}^T)^*P_2^T),
\end{split}
\end{align}
where $F_{LA} \in \bb C^{m \times (2n-1)}$ and $F_{RA} \in \bb C^{m \times (4n-1)}$ are centered DFT matrices, similar to those defined in \cref{eq:center_fft}.

Applying \cref{lem:mtrx-kron-prod},
\begin{align} \label{eqn:fundamental-vec}
\vect(X) &=T_R\vect(\big{|}\wh{[X,R]}\big{|}^2),
\end{align}
where
\begin{align}  \label{eq:tr_def}
T_R=\frac{1}{m^2}M_R^{-1}\left[(P_2F_{RA}^*)\otimes(P_1F_{LA}^*)\right].
\end{align}
This gives a linear mapping between the squared Fourier transform magnitudes $|\wh{[X,R]}|^2$ and the ground truth signal $X$.

\subsection{Referenced Deconvolution}  \label{sec:ref_deconv}
Combining~\cref{eqn:fundamental,eqn:fundamental-vec} gives an algorithm for recovering $X$ given $R$ and $Y \doteq |\wh{[X,R]}|^2$. In practice, the measurements $|\wh{[X,R]}|^2$ almost always contain noise, and we shall write the possibly noisy version as $\widetilde{Y}$.
\begin{algorithm}[Referenced Deconvolution for Holographic Phase Retrieval]  \label{alg:ref_deconv}
Let $X \in \mathbb{C}^{n \times n}$ be an unknown signal, $R \in \mathbb{C}^{n \times n}$ a known ``reference'' signal with $R(n-1,n-1) \neq 0$, and $Y \doteq \wh{[X, R]}$ size $m \times m$ Fourier transform of $[X, R]$ with $m \ge 4n-1$. Let  $\widetilde{Y}$ be a noisy version of $Y$.
\begin{enumerate}
\item Given $\widetilde{Y}$, apply an inverse Fourier transform ($\bb C^{m \times m} \mapsto \bb C^{(2n-1) \times  (4n-1)}$) to obtain $\widetilde{A_{[X,R]}}$, an estimate of the autocorrelation $A_{[X,R]}$;

\item Select the top-left $n \times n$ submatrix of $\widetilde{A_{[X,R]}}$, denoted as $\widetilde{C^{\diamond}_{[X,R]}}$, which is an estimate of the top-left $n \times n$ quadrant of the cross-correlation $C_{[X, R]}$;

\item Set $\vect(\widetilde{X})= M_R^{-1}\vect(\widetilde{C^{\diamond}_{[X,R]}})$.
\end{enumerate}
\end{algorithm}

\begin{remark}
If $\widetilde{Y}=Y$, then $\widetilde{X}=X$. In other words, \cref{alg:ref_deconv} provides exact recovery in the noiseless setting.
\end{remark}

\subsection{Special Cases} \label{subsec:special-cases}

We now specialize the referenced deconvolution algorithm to three popular reference choices: the pinhole reference, the slit reference, and the (constant) block reference (see \cref{fig:ref-compar}). The different choices lead to different $M_R$'s in Step 3 of the algorithm. For all the three choices, the resulting $M_R$ can be written as a Kronecker product of two simple matrices whose inverses are also explicit. To obtain explicit expressions for $M_R^{-1}$, we shall make use of the following fact.
\begin{lemma}[\cite{trench}] \label{lem:trench}
Suppose $M_1$ and $M_2$ are invertible. If $M=M_1\otimes M_2$, then $M$ is invertible and $M^{-1}=M_1^{-1} \otimes M_2^{-1}$.
\end{lemma}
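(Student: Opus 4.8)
The plan is to exhibit $M_1^{-1} \otimes M_2^{-1}$ explicitly as a two-sided inverse of $M = M_1 \otimes M_2$; doing so establishes invertibility and the claimed formula in one stroke. The only ingredient I need is the \emph{mixed-product property} of the Kronecker product, namely that $(A \otimes B)(C \otimes D) = (AC) \otimes (BD)$ whenever the matrix products $AC$ and $BD$ are defined.

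First I would derive this property from the block structure of $\otimes$. If $A = (a_{ij})$, then $A \otimes B$ is the block matrix whose $(i,j)$ block is $a_{ij} B$, and similarly for $C \otimes D$. Carrying out the block multiplication, the $(i,\ell)$ block of $(A \otimes B)(C \otimes D)$ equals $\sum_j (a_{ij} B)(c_{j\ell} D) = \bigl( \sum_j a_{ij} c_{j\ell} \bigr) BD = (AC)_{i\ell}\, BD$, which is precisely the $(i,\ell)$ block of $(AC) \otimes (BD)$. This confirms the identity.

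Given the property, the lemma follows immediately. Because $M_1$ and $M_2$ are square and invertible, both $M$ and $M_1^{-1} \otimes M_2^{-1}$ are square matrices of the same order $pq$, where $M_1$ is $p \times p$ and $M_2$ is $q \times q$. Two applications of the mixed-product property then yield $(M_1 \otimes M_2)(M_1^{-1} \otimes M_2^{-1}) = (M_1 M_1^{-1}) \otimes (M_2 M_2^{-1}) = I \otimes I = I$, and the symmetric computation gives $(M_1^{-1} \otimes M_2^{-1})(M_1 \otimes M_2) = I$. Hence $M$ is invertible with $M^{-1} = M_1^{-1} \otimes M_2^{-1}$.

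There is no genuine obstacle here; the argument is purely formal. The only points requiring care are bookkeeping the block indices in the mixed-product computation and observing that the identity factors collapse correctly, i.e. that $I_p \otimes I_q$ is the $pq \times pq$ identity. Since this is a standard fact, I would cite \cite{trench} for the statement but retain the short verification above for completeness.
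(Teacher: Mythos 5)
Your proof is correct. There is nothing in the paper to compare it against: the paper states this lemma with a citation to \cite{trench} and gives no proof, so your argument via the mixed-product property $(A \otimes B)(C \otimes D) = (AC) \otimes (BD)$, verified by block multiplication and then applied twice to exhibit $M_1^{-1} \otimes M_2^{-1}$ as a two-sided inverse, is the standard route and makes the claim self-contained. One small remark: the paper itself invokes the same mixed-product property later, in the proof of \cref{thm:slit-ref-err}, but the footnote there misstates it as $(A \otimes B)(C \otimes D)=(A \otimes C)(B \otimes D)$; the form you state and verify, $(AC) \otimes (BD)$, is the correct one.
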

Thus, for $M_R = M_1 \otimes M_2$ with both $M_1$ and $M_2$ invertible, $M_R^{-1} = M_1^{-1} \otimes M_2^{-1}$. Invoking~\cref{lem:mtrx-kron-prod} again, we conclude that Step 3 of the above referenced deconvolution algorithm becomes
\begin{align}
\widetilde{X}  = M_2^{-1} \widetilde{C^{\diamond}_{[X,R]}} \paren{M_1^{-1}}^\top.
\end{align}

\subsubsection{Pinhole Reference}

\begin{definition} \label{def:pinhole}
The pinhole reference $R_p \in \mathbb{C}^{n \times n}$ is given by
\begin{equation} \label{eqn:Four-holog-ref}
R_p(t_1,t_2) = \left\{
        \begin{array}{ll}
            1, & \quad t_1=t_2 =n-1 \\
            0, & \quad \mathrm{otherwise}.
        \end{array}
    \right.
\end{equation}
\end{definition}

It then follows that $M_{R_p}$ is simply the identity matrix $I_{n^2}$ (i.e., $I_n \otimes I_n$) and $X$ is equal to $C_{[X,R]}^\diamond$. Our referenced deconvolution algorithm reduces to the non-iterative reconstruction procedure for Fourier holography~\cite{Four-holog}. The deconvolution procedure thus has $O(1)$ computational complexity.

\subsubsection{Slit Reference}

\begin{definition} \label{def:slit_ref}
The slit reference $R_s$ (see, e.g.,~\cref{HERALDO-compar}) is given by
\begin{equation} \label{eqn:HERALDO-ref}
R_s(t_1,t_2) = \left\{
        \begin{array}{ll}
            1, & \quad t_2=n-1 \\
            0, & \quad \mathrm{otherwise}.
        \end{array}
    \right.
\end{equation}
\end{definition}
Let $\mb 1_L \in \R^{n \times n}$ be a lower-triangular matrix consisting of all ones on and below the main diagonal. It can be verified that
\begin{align}  \label{eqn:M-R-p}
M_{R_s} = \diag\paren{\mb 1_L, \dots, \mb 1_L} = I_n \otimes \mathbf{1}_L.
\end{align}
The inverse of $\mb 1_L$ is the first-order difference matrix~\cite{Strang}:
\begin{equation} \label{eqn:D-defn}
D_n(t_1,t_2) = \left\{
        \begin{array}{ll}
            1, & \quad t_1=t_2 \\
            -1, & \quad t_1=t_2-1, \quad 1\leq t_2\leq n-1\\
            0, & \quad \text{otherwise}.
        \end{array}
    \right.
\end{equation}
Thus,
\begin{align}
\widetilde{X}  = D_n \widetilde{C^{\diamond}_{[X,R]}},
\end{align}
which only requires $O(n^2)$ operations to compute due to the sparsity structure of $D_n$.

\subsubsection{Block Reference}
\begin{definition} \label{def:block_ref}
The constant block reference $R_b$ (see, e.g., \cref{block-compar}) is given by
\begin{equation} \label{eqn:empty-space-ref}
R_b(t_1,t_2)=1, \quad t_1, t_2 \in \{0,\dots,n-1\}.
\end{equation}
\end{definition}
The corresponding $M_{R_b}$ is
\begin{equation} \label{eqn:M-R-B}
M_{R_b}=\mathbf{1}_L \otimes \mathbf{1}_L,
\end{equation}
and has inverse
\begin{equation} \label{eqn:M-R-B-inv}
M_{R_b}^{-1}=D_n \otimes D_n.
\end{equation}
Thus,
\begin{align}
\widetilde{X}  = D_n \widetilde{C^{\diamond}_{[X,R]}} D_n^\top,
\end{align}
which as well is computed in $O(n^2)$ operations due to the sparsity structure of $D_n$.

On these special references, our referenced deconvolution algorithm is equivalent to the HERALDO procedure~\cite{HERALDO}. We provide more details on the connection in~\cref{sec:connect_heraldo}.

\section{Error Analysis and Comparison} \label{sec:analysis}

Let $\widetilde{Y}$ be the measurement data subject to certain stochastic noise and $\widetilde{X}$ be the estimate of the signal of interest returned by the referenced deconvolution algorithm. Hence, $\vect(\widetilde{X})=T_R\vect(\widetilde{Y})$, where $T_R$ is as defined in~\cref{eq:tr_def}. This linear relationship allows us to derive a general formula for the expected squared recovery error in~\cref{subsec:Pois}, which is then specialized for a Poisson shot noise model. We further simplify the error formula for the three reference choices listed in~\cref{sec:analysis_special_cases}. This leads to insights regarding their recovery performance, which are discussed in~\cref{subsec:design-thry}.

\subsection{Expected Error Formula} \label{subsec:Pois}
In dealing with complex-valued matrices, we use the standard Frobenius inner product as induced by the standard complex vector inner product.
\begin{definition}
For $B,C \in \bb C^{m \times m}$, their Frobenius (or, Hilbert-Schmidt) inner product is defined as $\innerprod{B}{C} = \trace(BC^*)$. The Frobenius matrix norm is induced by this inner product in a natural way: $\norm{B}{F} = \paren{\innerprod{B}{B}}^{1/2} = \paren{\sum_{i, j} \abs{B_{i, j}}^2}^{1/2}$.
\end{definition}
The following trace-shuffling identity is also useful for our subsequent calculation.
\begin{lemma}
For complex matrices $B_1, B_2, C_1, C_2$ of compatible dimensions, $\innerprod{B_1 C_1}{B_2 C_2} = \innerprod{B_2^*B_1}{C_2C_1^*}$.
\end{lemma}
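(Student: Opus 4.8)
The plan is to unfold both sides of the claimed identity using only the definition $\innerprod{B}{C} = \trace(BC^*)$, the adjoint anti-homomorphism property $(UV)^* = V^*U^*$, and the cyclic invariance of the trace. No structural insight is required: the entire content is bookkeeping of factor order and conjugate transposes, and the compatibility-of-dimensions hypothesis is exactly what guarantees every product below (and its trace) is well-defined.

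First I would expand the left-hand side. By definition, $\innerprod{B_1 C_1}{B_2 C_2} = \trace\paren{(B_1 C_1)(B_2 C_2)^*}$, and applying $(B_2 C_2)^* = C_2^* B_2^*$ yields $\trace\paren{B_1 C_1 C_2^* B_2^*}$. Next I would expand the right-hand side in the same way: $\innerprod{B_2^* B_1}{C_2 C_1^*} = \trace\paren{(B_2^* B_1)(C_2 C_1^*)^*}$. Here $(C_2 C_1^*)^* = (C_1^*)^* C_2^* = C_1 C_2^*$, using $(C_1^*)^* = C_1$, so the expression simplifies to $\trace\paren{B_2^* B_1 C_1 C_2^*}$.

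The final step is to observe that the two resulting traces differ only by a cyclic shift of a single factor: writing $U = B_1 C_1 C_2^*$ and $W = B_2^*$, the cyclic property $\trace(UW) = \trace(WU)$ gives $\trace\paren{B_1 C_1 C_2^* B_2^*} = \trace\paren{B_2^* B_1 C_1 C_2^*}$, which matches the expanded right-hand side and completes the proof.

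I do not expect a genuine obstacle here; the only points that demand care are correctly applying $(UV)^* = V^*U^*$ (so that the factor order reverses) and remembering $(C_1^*)^* = C_1$, after which the whole identity reduces to a one-line cyclic shift of the trace.
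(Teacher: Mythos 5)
Your proof is correct, and it is exactly the routine verification the paper has in mind: the paper states this trace-shuffling identity without proof, treating it as immediate from the definition $\innerprod{B}{C} = \trace(BC^*)$, the reversal rule $(UV)^* = V^*U^*$, and cyclicity of the trace. Your expansion of both sides to $\trace\paren{B_1 C_1 C_2^* B_2^*}$ and $\trace\paren{B_2^* B_1 C_1 C_2^*}$ followed by a single cyclic shift is complete and fills in precisely the omitted bookkeeping.
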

We shall use this identity in obtaining a general formula for the expected squared recovery error.
\begin{lemma}   \label{lem:exp_error}
The expected squared recovery error given by the referenced deconvolution algorithm is
\begin{equation} \label{eqn:linear-exp}
\bb E \norm{\widetilde{X} - X}{F}^2 = \innerprod{T_R^* T_R}{\bb E \brac{\vect\paren{\widetilde{Y}}-\vect\paren{Y}} \brac{\vect\paren{\widetilde{Y}} -\vect\paren{Y}}^*},
\end{equation}
where for any given reference $R$, $T_R$ is as defined in~\cref{eq:tr_def}.
\end{lemma}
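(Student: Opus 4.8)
The plan is to reduce everything to the single linear relation between recovery error and measurement noise, and then apply the trace-shuffling identity stated just above the lemma. The key observation is that the \emph{same} operator $T_R$ maps the clean data to $X$ and the noisy data to $\widetilde{X}$: \cref{eqn:fundamental-vec} gives $\vect(X) = T_R\vect(Y)$ with $Y = |\wh{[X,R]}|^2$, while the referenced deconvolution algorithm yields $\vect(\widetilde{X}) = T_R\vect(\widetilde{Y})$. Writing $e \doteq \vect(\widetilde{Y}) - \vect(Y)$ for the vectorized measurement error and subtracting, linearity gives
\begin{equation*}
\vect(\widetilde{X}) - \vect(X) = T_R\, e .
\end{equation*}
Thus the recovery error is nothing more than the fixed linear map $T_R$ applied to $e$.

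Next I would translate the Frobenius norm of the error into a quadratic form in $e$. Since $\norm{Z}{F} = \norm{\vect(Z)}{2}$ for any matrix $Z$, and the squared Euclidean norm of a column vector $v$ equals $\trace(v v^*) = \innerprod{v}{v}$, we obtain
\begin{equation*}
\norm{\widetilde{X} - X}{F}^2 = \innerprod{T_R e}{T_R e}.
\end{equation*}
I would then invoke the trace-shuffling identity with $B_1 = B_2 = T_R$ and $C_1 = C_2 = e$, which collapses both copies of $T_R$ into the left slot as $T_R^* T_R$ and both copies of $e$ into the right slot as an outer product:
\begin{equation*}
\norm{\widetilde{X} - X}{F}^2 = \innerprod{T_R^* T_R}{e\, e^*}.
\end{equation*}

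Finally I would take expectations. Because $T_R$ is deterministic, $T_R^* T_R$ is a constant matrix, and $\innerprod{T_R^* T_R}{\cdot}$ is (conjugate-)linear in its second argument, so expectation passes through it to give $\bb E \norm{\widetilde{X} - X}{F}^2 = \innerprod{T_R^* T_R}{\bb E[e\, e^*]}$; substituting back $e = \vect(\widetilde{Y}) - \vect(Y)$ yields exactly \cref{eqn:linear-exp}.

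This computation is essentially bookkeeping, so I do not expect a substantive obstacle; the only points demanding care are the conjugation conventions. Specifically, one must apply the trace-shuffling identity with the adjoints placed so that the left slot becomes $T_R^* T_R$ rather than $T_R T_R^*$, and one must justify exchanging $\bb E$ with $\innerprod{T_R^* T_R}{\cdot}$ — which holds because the inner product is (conjugate-)linear in its argument and $\bb E$ commutes entrywise with conjugation, valid under the implicit assumption that the noise has finite second moments so that $\bb E[e\, e^*]$ exists.
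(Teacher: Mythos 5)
Your proposal is correct and follows essentially the same route as the paper's own proof: express the recovery error as $T_R$ applied to the measurement-noise vector, convert the squared Frobenius norm to an inner product, apply the trace-shuffling identity with $B_1=B_2=T_R$ and $C_1=C_2=e$ to obtain $\innerprod{T_R^*T_R}{e\,e^*}$, and pass the expectation through the inner product. Your extra remarks on conjugation placement and finiteness of second moments are sound but do not change the argument.
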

\begin{proof}
Direct calculation gives
\begin{align*} \label{eqn:linear-exp}
\bb E \norm{\widetilde{X} - X}{F}^2
&= \bb E \norm{ T_R\vect \paren{\widetilde{Y}}-T_R\vect\paren{Y} }{F}^2 \\
&= \bb E \innerprod{T_R\vect\paren{\widetilde{Y}} -T_R\vect \paren{Y} }{T_R\vect\paren{\widetilde{Y}}-T_R\vect\paren{Y}} \\
&= \bb E \innerprod{T_R^* T_R}{\brac{\vect\paren{\widetilde{Y}}-\vect\paren{Y} } \brac{\vect\paren{\widetilde{Y}} -\vect\paren{Y} }^*} \\
&= \innerprod{T_R^* T_R}{\bb E \brac{\vect\paren{\widetilde{Y}}-\vect\paren{Y} } \brac{\vect\paren{\widetilde{Y}} -\vect\paren{Y} }^*},
\end{align*}
as claimed.
\end{proof}

This formula provides a very reasonable design target: given $X$ and the noise model, one can seek a reference choice $R$ which minimizes the expected squared error. This perspective forms the basis of our subsequent analysis.

We now specialize our analysis assuming a Poisson shot noise model on the data~\cite{Schottky}. Poisson shot noise occurs in any experiment in which photons are collected. It is an inherent feature of the quantum nature of photon emission, and cannot be removed by any physical apparatus~\cite{Schottky}. The model can be described as follows. Let $N_p$ be the expected (or nominal) number of photons reaching the detector. Given the squared Fourier transform magnitudes $Y=|\wh{[X,R]}|^2$, let
\begin{equation}
\norm{Y}{1} \doteq \sum_{k_1,k_2=0}^{m-1} Y(k_1,k_2).
\end{equation}
Then, the photon flux at the $\paren{k_1, k_2}$-th pixel location is given by a Poisson distribution with parameter $N_p Y(k_1,k_2)/\norm{Y}{1}$ and then scaled by $\norm{Y}{1}/N_p$. These pixel distributions are also assumed to be jointly independent~\cite{CDI-stats}. We thus have the data given by
\begin{equation} \label{eqn:data}
 \widetilde{Y} \sim \frac{\norm{Y}{1}}{N_p}\mathrm{Pois}\Big{(}\frac{N_p}{\norm{Y}{1}}Y\Big{)}.
\end{equation}

We now apply this noise model to \cref{eqn:linear-exp}. Recall that both the mean and variance of a Poisson-distributed random variable with parameter $\lambda$ are equal to $\lambda$. It then follows that \footnote{We shall denote as $\diag(\cdot)$ the operator that maps a vector to the corresponding diagonal matrix, and by $\text{Diag}(\cdot)$ the operator that maps the diagonal of a matrix to the corresponding vector.}
\begin{align*}
    \bb E \brac{\vect\paren{\widetilde{Y}}-\vect\paren{Y} } \brac{\vect\paren{\widetilde{Y}} -\vect\paren{Y} }^*
    &= \frac{\norm{Y}{1}^2}{N_p^2} \times \frac{N_p}{\norm{Y}{1}} \diag(\vect(Y)) \\
    &= \frac{\norm{Y}{1}}{N_p} \diag(\vect(Y)).
\end{align*}
Hence,
\begin{align} \label{eqn:linear-exp-fund}
\mathbb{E}\|\widetilde{X} - X\|_F^2
= \Big{\langle} T_R^*T_R, \frac{\norm{Y}{1}}{N_p}\diag(\vect(Y))\Big{\rangle}
= \frac{\norm{Y}{1}}{N_p} \innerprod{S_R}{Y},
\end{align}
where $S_R=\mathrm{reshape}\big{(}\text{Diag}(T_R^*T_R),m,m\big{)}$, and $\mathrm{reshape}(\cdot,m,m)$ is the columnwise vector-to-matrix reshaping operator. We shall term $S_R$ the \textit{reference scaling factor} corresponding to a reference $R$. Thus, the expected squared recovery error is proportional to the weighted sum of the squared frequency values in $Y$, where the weights are determined by $S_R$. $S_R$ can be efficiently computed using the following observation.

\begin{remark} \label{rem:S-R-col-norm}
For all $k_1,k_2 \in \{0,\cdots, m-1\}$, let $k = mk_1 + k_2$. Then
\begin{equation}
S_R(k_1,k_2)=\|T_R(:,k)\|_2^2,
\end{equation}
where $T_R(:,k)$ denotes the $k^{\text{th}}$ column of $T_R$.
\end{remark}

Before we compute the $S_R$'s for the special references (i.e., \cref{sec:analysis_special_cases}), here we derive a (conservative) uniform lower bound on $S_R$.
\begin{theorem} \label{thm-lower-bound}
For all reference choices $R$ (with entry magnitudes normalized within $[0,1]$), and for all $k_1,k_2 \in \{0, \dots, m-1\}$,
\begin{equation}
S_R(k_1,k_2) \geq \frac{1}{m^4}.
\end{equation}
\end{theorem}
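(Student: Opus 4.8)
The plan is to work directly with the column-norm characterization $S_R(k_1,k_2)=\|T_R(:,k)\|_2^2$ from \cref{rem:S-R-col-norm}, and to exploit the definition \cref{eq:tr_def} after rearranging it into the matrix identity $M_R T_R = \frac{1}{m^2}\brac{(P_2 F_{RA}^*)\otimes(P_1 F_{LA}^*)}$. The right-hand side is especially convenient: since every entry of $F_{LA}^*$ and $F_{RA}^*$ is a complex exponential of unit modulus, and $P_1,P_2$ merely select rows, every entry of the Kronecker product—hence every entry of $M_R T_R$—has modulus exactly $1/m^2$. Before committing to this route I would note that the obvious estimate $\|T_R(:,k)\|_2\ge \|g_k\|_2/(m^2\|M_R\|_{\op})$, where $g_k$ is the $k$-th column of the Kronecker factor, is too lossy: $\|M_R\|_{\op}$ can be of order $n^2$ (for instance $M_{R_b}=\mathbf{1}_L\otimes\mathbf{1}_L$ has $\|\mathbf{1}_L\|_{\op}=\Theta(n)$, so $\|M_{R_b}\|_{\op}=\Theta(n^2)$), which would only yield a bound of order $1/(m^4 n^2)$.

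The key structural observation is that $M_R$ always possesses a row with a single nonzero entry, namely $\overline{R(n-1,n-1)}$. This comes from the extreme corner of the cross-correlation in \cref{eq:cross_corr_XR}: the index $(s_1,s_2)=(-(n-1),-(n-1))$ forces $t_1=t_2=0$ in the sum, so that $C^\diamond_{[X,R]}(-(n-1),-(n-1))=X(0,0)\,\overline{R(n-1,n-1)}$ is a single-term measurement. In the ordering used in \cref{eg:M-R} this is precisely the top row $[\overline{r_{n-1,n-1}},0,\dots,0]$. Writing $j$ for this row index and $c$ for the column position of its unique nonzero entry, I would read off the $(j,k)$ entry of the matrix identity above: $(M_R T_R)_{j,k}=\overline{R(n-1,n-1)}\,T_R(c,k)$, whose modulus we already know equals $1/m^2$.

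Solving this scalar equation for the single unknown gives $\abs{T_R(c,k)}=1/\paren{m^2\abs{R(n-1,n-1)}}\ge 1/m^2$, using the normalization $\abs{R(n-1,n-1)}\le 1$ (invertibility of $M_R$ already guarantees $R(n-1,n-1)\neq 0$). Since one coordinate of the column $T_R(:,k)$ already has modulus at least $1/m^2$, we conclude $S_R(k_1,k_2)=\|T_R(:,k)\|_2^2\ge \abs{T_R(c,k)}^2\ge 1/m^4$, uniformly over all $k_1,k_2$ and over every admissible $R$. The main obstacle is conceptual rather than computational: one must resist passing through $\|M_R\|_{\op}$ and instead locate the single-overlap row, which lets a one-line scalar relation deliver the sharp $1/m^4$ bound without ever inverting $M_R$.
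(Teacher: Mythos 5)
Your proof is correct and is essentially the paper's own argument: both proofs isolate the single coordinate $T_R(0,k)$, whose modulus equals $\frac{1}{m^2\,|R(n-1,n-1)|}\geq \frac{1}{m^2}$ because the corner entry of $M_R$ decouples row $0$ and every entry of $(P_2F_{RA}^*)\otimes(P_1F_{LA}^*)$ has unit modulus, and then discard the remaining nonnegative terms in $\|T_R(:,k)\|_2^2$. The only difference is cosmetic: you read the scalar relation off $M_R T_R=\frac{1}{m^2}\left[(P_2F_{RA}^*)\otimes(P_1F_{LA}^*)\right]$ via the single-nonzero top row of $M_R$, while the paper states the equivalent fact that $M_R^{-1}$ is lower triangular with $|M_R^{-1}(0,0)|\geq 1$.
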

\begin{proof}
For all $k_1, k_2 \in \set{0, \dots, m-1}$ and the corresponding $k = mk_1 + k_2$,
\begin{align*}
S_R(k_1,k_2) &= \|T_R(:,k)\|_2^2\\
&= |T_R(0,k)|^2 + \sum_{t=1}^{n^2-1}|T_R(t,k)|^2\\
&= \frac{1}{m^4}\abs{
\paren{M_R^{-1}\brac{(P_2F_{RA}^*)\otimes(P_1F_{LA}^*)}}(0,k) }^2 + \sum_{t=1}^{n^2-1}|T_R(t,k)|^2.
\end{align*}
By~\cref{lem:trench}, $M_R^{-1}$ is lower triangular. So $\paren{M_R^{-1}\brac{(P_2F_{RA}^*)\otimes(P_1F_{LA}^*)}}(0,k)$ is equal to the product of $M_R^{-1}(0, 0)$, and the first element of the $k$-th column of $(P_2F_{RA}^*)\otimes(P_1F_{LA}^*)$ which takes the form $e^{i \theta}$ for a certain $\theta$. Thus,
\begin{align}
\abs{
\paren{M_R^{-1}\brac{(P_2F_{RA}^*)\otimes(P_1F_{LA}^*)}}(0,k) }^2 = \abs{M_R^{-1}(0, 0)}^2 \ge 1,
\end{align}
where the last inequality holds, as we assume $M_R(0, 0) \in [0, 1]$. This completes the proof.
\end{proof}

\subsection{Special Cases}  \label{sec:analysis_special_cases}
For the special cases, we shall see that $T_R$ takes the form of $B_1 \otimes B_2$ for certain matrices $B_1$ and $B_2$, which motivates the following result.
\begin{lemma} \label{lem:diag-kron-fact}
Suppose $B_1 \in \bb C^{n_1 \times m_1}$, $B_2 \in \bb C^{n_2 \times m_2}$, and $B=B_1 \otimes B_2 \in \bb C^{n_1n_2 \times m_1m_2}$. Then, for $k_1 \in \{0,\dots, m_1\}$, $k_2 \in \{0,\dots, m_2\}$, and $k=m_2k_1+k_2 \in \{0, \dots, m_1m_2-1\}$, it holds that
\begin{align*}
\norm{B\paren{:, k}}{}^2 = \norm{B_1\paren{:, k_1}}{}^2
\norm{B_2\paren{:, k_2}}{}^2.
\end{align*}
\end{lemma}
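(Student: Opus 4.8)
The plan is to reduce the statement to the elementary fact that the Euclidean norm is multiplicative under the Kronecker product of \emph{vectors}, namely $\norm{u \otimes v}{}^2 = \norm{u}{}^2 \norm{v}{}^2$, and then to identify each column of $B = B_1 \otimes B_2$ as such a Kronecker product of columns.

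First I would unpack the column structure of $B$ directly from the entrywise definition of the Kronecker product. Indexing the rows of $B$ by pairs $\paren{i_1, i_2}$ via $i = n_2 i_1 + i_2$ and the columns by pairs $\paren{j_1, j_2}$ via $j = m_2 j_1 + j_2$, one has $B\paren{i, j} = B_1\paren{i_1, j_1}\, B_2\paren{i_2, j_2}$. Fixing the column index $k = m_2 k_1 + k_2$ and letting the row index range over all of $\set{0, \dots, n_1 n_2 - 1}$ then shows that the $k$-th column of $B$ is exactly $B_1\paren{:, k_1} \otimes B_2\paren{:, k_2}$; that is, the column of the Kronecker product labelled $k$ is the Kronecker product of the columns labelled $k_1$ and $k_2$. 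This identification is the only place where the index bookkeeping matters, and it is precisely what pins down the correspondence $k = m_2 k_1 + k_2$ asserted in the lemma.

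With this in hand the computation is immediate: setting $u = B_1\paren{:, k_1}$ and $v = B_2\paren{:, k_2}$, the entries of $u \otimes v$ are the products $u_a v_b$, so that
\begin{align*}
\norm{B\paren{:, k}}{}^2 = \norm{u \otimes v}{}^2 = \sum_{a, b} \abs{u_a v_b}^2 = \paren{\sum_a \abs{u_a}^2}\paren{\sum_b \abs{v_b}^2} = \norm{B_1\paren{:, k_1}}{}^2 \norm{B_2\paren{:, k_2}}{}^2,
\end{align*}
which is the claimed factorization. I do not expect any genuine obstacle here, as this is a standard property of Kronecker products; the only point requiring care is verifying that the flattening $k = m_2 k_1 + k_2$ used in the statement matches the row/column ordering implicit in the definition of $\otimes$, so that the column of $B$ indexed by $k$ pairs $k_1$ with $k_2$ in the correct order rather than a transposed one.
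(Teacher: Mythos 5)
Your proposal is correct and follows essentially the same route as the paper: both arguments identify the $k$-th column of $B_1 \otimes B_2$ as the Kronecker product $B_1(:,k_1) \otimes B_2(:,k_2)$ (the paper writes it in stacked-block form, you via the entrywise index formula) and then factor the squared norm directly. The index bookkeeping you flag as the only delicate point is exactly what the paper's block display encodes, so there is no substantive difference.
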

\begin{proof}
By the definition of Kronecker product,
\begin{align*}
B\paren{:, k} =
\begin{bmatrix}
B_1(0, k_1) B_2(:, k_2) \\
B_1(1, k_1) B_2(:, k_2) \\
\dots \\
B_1(n_1, k_1) B_2(:, k_2)
\end{bmatrix}.
\end{align*}
Thus,
\begin{align*}
\norm{B\paren{:, k} }{}^2
= \sum_{t = 0}^{n_1} \abs{B_1(t, k_1)}^2 \norm{B_2(:, k_2)}{}^2
= \norm{B_2(:, k_2)}{}^2 \sum_{t = 0}^{n_1} \abs{B_1(t, k_1)}^2
= \norm{B_2(:, k_2)}{}^2 \norm{B_1(:, k_1)}{}^2,
\end{align*}
as claimed.
\end{proof}

Next, we make use of the result to calculate the expected recovery errors of the special references as introduced in \cref{subsec:special-cases}.

\subsubsection{Pinhole Reference}

\begin{theorem} \label{thm:holog-ref-err}
Let $R_p$ denote the pinhole reference given by \cref{def:pinhole}. For $k_1,k_2 \in \{0,\dots,m-1\}$,
\begin{align}
S_{R_p}(k_1,k_2)=\frac{n^2}{m^4}.
\end{align}
\end{theorem}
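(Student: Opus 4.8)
The plan is to exploit the single fact that distinguishes the pinhole reference, namely that $M_{R_p}$ degenerates to the identity, so that $T_{R_p}$ becomes a clean Kronecker product and \cref{lem:diag-kron-fact} applies essentially verbatim. Concretely, as noted immediately after \cref{def:pinhole} we have $M_{R_p} = I_{n^2}$, so $M_{R_p}^{-1} = I_{n^2}$ and the defining expression \cref{eq:tr_def} collapses to
\begin{equation*}
T_{R_p} = \frac{1}{m^2}\brac{(P_2 F_{RA}^*) \otimes (P_1 F_{LA}^*)}.
\end{equation*}
I would then absorb the scalar symmetrically into the two factors and write $T_{R_p} = B_1 \otimes B_2$ with $B_1 = \tfrac{1}{m}P_2 F_{RA}^*$ and $B_2 = \tfrac{1}{m}P_1 F_{LA}^*$, both of size $n \times m$. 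Checking dimensions, $F_{RA}^*$ is $(4n-1)\times m$ and $P_2$ is $n \times (4n-1)$, while $F_{LA}^*$ is $(2n-1)\times m$ and $P_1$ is $n \times (2n-1)$, so each factor indeed has $n$ rows and $m$ columns, consistent with the index convention $k = mk_1 + k_2$ used in \cref{rem:S-R-col-norm} and \cref{lem:diag-kron-fact}.

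Next, recalling from \cref{rem:S-R-col-norm} that $S_{R_p}(k_1,k_2) = \norm{T_{R_p}(:,k)}{2}^2$, I would apply \cref{lem:diag-kron-fact} to $T_{R_p} = B_1 \otimes B_2$ to obtain the factorization $S_{R_p}(k_1,k_2) = \norm{B_1(:,k_1)}{}^2\,\norm{B_2(:,k_2)}{}^2$, reducing the whole problem to evaluating two single-column norms. The key computation is then that every entry of $F_{LA}^*$ and $F_{RA}^*$ is the complex conjugate of a centered-DFT entry from \cref{eq:center_fft}, hence of the form $e^{2\pi i kt/m}$ with unit modulus. Since $P_1$ and $P_2$ merely select $n$ rows, each column of $P_1 F_{LA}^*$ and of $P_2 F_{RA}^*$ consists of exactly $n$ unimodular entries, giving $\norm{(P_1F_{LA}^*)(:,k_2)}{}^2 = n$ and $\norm{(P_2F_{RA}^*)(:,k_1)}{}^2 = n$, and therefore $\norm{B_2(:,k_2)}{}^2 = n/m^2$ and $\norm{B_1(:,k_1)}{}^2 = n/m^2$. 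Multiplying the two factors yields $S_{R_p}(k_1,k_2) = (n/m^2)(n/m^2) = n^2/m^4$, uniformly in $k_1,k_2$.

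I do not expect a genuine obstacle here; the only point requiring care is the bookkeeping of the index ranges and the verification that $P_1,P_2$ extract precisely $n$ rows, so that each column truly contributes exactly $n$ terms of modulus one. The frequency-independence of $S_{R_p}$ then emerges automatically, since the value of each column norm does not depend on which specific rows are selected nor on $k_1,k_2$ — the unimodularity of the DFT entries is the entire content of the computation.
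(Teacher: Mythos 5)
Your proposal is correct and follows essentially the same route as the paper's proof: use $M_{R_p}=I_{n^2}$ to write $T_{R_p}$ as a Kronecker product, invoke \cref{rem:S-R-col-norm} and \cref{lem:diag-kron-fact} to factor the column norm, and conclude from the unimodularity of the entries of $P_2F_{RA}^*$ and $P_1F_{LA}^*$ that each factor equals $n$. The only cosmetic difference is that you absorb the $1/m^2$ scalar symmetrically into the two Kronecker factors (and add an explicit dimension check), whereas the paper keeps it outside; the computation is otherwise identical.
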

\begin{proof}
Since $M_{R_P} = I_{n^2}$, by \cref{eqn:fundamental-vec}, we have that
\begin{align*}
T_{R_p} = \frac{1}{m^2} \paren{P_2 F_{RA}^*} \otimes \paren{P_1 F_{LA}^*}.
\end{align*}
By \cref{rem:S-R-col-norm} and \cref{lem:diag-kron-fact}, for any $k_1, k_2 \in \set{0, \dots, m-1}$ and $k = mk_1 + k_2$,
\begin{align*}
S_{R_p}(k_1,k_2) = \frac{1}{m^4} \norm{\paren{P_2 F_{RA}^*} \paren{:, k_1}}{}^2 \norm{\paren{P_1 F_{LA}^*} \paren{:, k_2}}{}^2.
\end{align*}
Observing that any element in $P_2 F_{RA}^*$ or $P_1 F_{LA}^*$ has a unit norm, we conclude that
\begin{align*}
\norm{\paren{P_2 F_{RA}^*} \paren{:, k_1}}{}^2
= \norm{\paren{P_1 F_{LA}^*} \paren{:, k_2}}{}^2 = n,
\end{align*}
implying the claimed result.
\end{proof}

\subsubsection{Slit Reference}

\begin{theorem} \label{thm:slit-ref-err}
Let $R_s$ denote the slit reference given by \cref{def:slit_ref}. For $k_1,k_2 \in \{0,\dots,m-1\}$,
\begin{align*}
S_{R_s}(k_1,k_2) =\frac{n}{m^4}\brac{1 + 2\paren{n-1} \paren{1-\cos \paren{2\pi k_2 /m}}}.
\end{align*}
\end{theorem}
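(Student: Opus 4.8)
The plan is to reuse the Kronecker-product machinery exactly as in the pinhole case (\cref{thm:holog-ref-err}) and then carry out the one genuinely new computation: the squared column norm of the factor carrying the difference matrix $D_n$.

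First I would write $T_{R_s}$ explicitly. Since $M_{R_s} = I_n \otimes \mathbf{1}_L$ by \cref{eqn:M-R-p} and $\mathbf{1}_L^{-1} = D_n$, \cref{lem:trench} gives $M_{R_s}^{-1} = I_n \otimes D_n$. Substituting into \cref{eq:tr_def} and applying the mixed-product rule $(A \otimes B)(C \otimes D) = (AC)\otimes(BD)$ yields
\begin{align*}
T_{R_s} = \frac{1}{m^2}\paren{P_2 F_{RA}^*} \otimes \paren{D_n P_1 F_{LA}^*}.
\end{align*}
This has the form $\tfrac{1}{m^2}B_1 \otimes B_2$, so \cref{rem:S-R-col-norm} together with \cref{lem:diag-kron-fact} gives $S_{R_s}(k_1,k_2) = \tfrac{1}{m^4}\norm{\paren{P_2 F_{RA}^*}(:,k_1)}{}^2\,\norm{\paren{D_n P_1 F_{LA}^*}(:,k_2)}{}^2$.

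The first factor is handled exactly as for the pinhole reference: every entry of $P_2 F_{RA}^*$ has unit modulus and each column holds $n$ entries, so $\norm{\paren{P_2 F_{RA}^*}(:,k_1)}{}^2 = n$. The main work is the second factor. I would let $v$ denote the $k_2$-th column of $P_1 F_{LA}^*$. Because $F_{LA}^*(t,k)=e^{2\pi i kt/m}$ and $P_1$ selects $n$ consecutive rows, $v$ is a geometric progression $v(j)=c\,\omega^{j}$ with $\omega = e^{2\pi i k_2/m}$ and some unit-modulus constant $c$. The difference matrix then acts by $(D_n v)(j)=v(j)-v(j+1)$ for $j\le n-2$ and $(D_n v)(n-1)=v(n-1)$. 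Each of the first $n-1$ differences equals $c\,\omega^{j}(1-\omega)$ and hence has squared modulus $\abs{1-\omega}^2 = 2\paren{1-\cos(2\pi k_2/m)}$, while the last entry has squared modulus $1$. Summing gives $\norm{(D_n P_1 F_{LA}^*)(:,k_2)}{}^2 = 1 + 2(n-1)\paren{1-\cos(2\pi k_2/m)}$, and multiplying the two factors produces the claimed formula.

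The only real obstacle is the bookkeeping around the centered DFT matrix $F_{LA}$: one must confirm that selecting the top $n$ rows of $F_{LA}^*$ through $P_1$ yields a genuine geometric progression in the phase. Reassuringly, the exact contiguous block of frequencies is immaterial, since any such choice gives $v(j)=c\,\omega^{j}$ with $\abs{c}=1$, and the constant $c$ cancels in every difference $v(j)-v(j+1)$, so the offset never enters the norm. The remaining ingredients are the elementary identity $\abs{1-e^{i\phi}}^2 = 2(1-\cos\phi)$ and the observation that all $n-1$ interior differences share the same magnitude, which is precisely what collapses the sum into the single closed form.
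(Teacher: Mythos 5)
Your proposal is correct and follows essentially the same route as the paper's proof: the same Kronecker factorization $T_{R_s} = \tfrac{1}{m^2}\paren{P_2 F_{RA}^*} \otimes \paren{D_n P_1 F_{LA}^*}$ via the mixed-product rule, the same reduction through \cref{rem:S-R-col-norm} and \cref{lem:diag-kron-fact}, and the same column-norm computation, with the paper writing out the entries $e^{-2\pi i t k_2/m}$ explicitly where you phrase them as a geometric progression with a unit-modulus prefactor that cancels. The two arguments are the same computation in slightly different notation.
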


\begin{proof}
By the discussion below \cref{def:slit_ref},
\begin{align*}
M_{R_s}^{-1} = I_n \otimes D_n,
\end{align*}
where $D_n$ is the first-order difference matrix defined in \cref{eqn:D-defn}. So by \cref{eqn:fundamental-vec},
\begin{align*}
T_{R_s} = \frac{1}{m^2} \brac{I_n \otimes D_n} \paren{(P_2F_{RA}^*)\otimes(P_1F_{LA}^*)}
= (P_2F_{RA}^*) \otimes \paren{D_n P_1F_{LA}^*},
\end{align*}
where in the last equality we have used the ``mixed-product'' property of Kronecker products\footnote{This is the fact that $(A \otimes B)(C \otimes D)=(A \otimes C)(B \otimes D)$ for matrices $A,B,C,D$ of compatible dimensions.}.

By \cref{rem:S-R-col-norm} and \cref{lem:diag-kron-fact}, for any $k_1, k_2 \in \set{0, \dots, m-1}$ and $k = mk_1 + k_2$,
\begin{align*}
S_{R_s}(k_1,k_2) =
\frac{1}{m^4} \norm{\paren{P_2 F_{RA}^*}(:, k_1)}{}^2 \norm{\paren{D_n P_1F_{LA}^*}(:, k_2)}{}^2.
\end{align*}
Per the proof of \cref{thm:holog-ref-err}, $\norm{\paren{P_2 F_{RA}^*}(:, k_1)}{}^2 = n$. For the other term,
\begin{align*}
\norm{\paren{D_n P_1F_{LA}^*}(:, k_2)}{}^2
= & \norm{D_n P_1\brac{F_{LA}^*(:, k_2)}}{}^2\\
= & \norm{D_n
\begin{bmatrix}
e^{ -2\pi i \paren{n-1} k_2/m} \\
e^{ -2\pi i \paren{n-2} k_2/m} \\
\dots \\
 e^{ 2\pi i 0 k_2/m}
\end{bmatrix}}{}^2\\
= & 1 + \sum_{t=1}^{n-1} \abs{e^{-2\pi i t k_2/m} - e^{-2\pi i (t-1) k_2/m}}^2 \\
= & 1 + \sum_{t=1}^{n-1} \paren{2 - 2\cos \paren{2\pi k_2 /m}} \\
= & 1 + 2\paren{n-1} \paren{1 - \cos \paren{2\pi k_2 /m}},
\end{align*}
completing the proof.
\end{proof}

\subsubsection{Block Reference}

\begin{theorem} \label{thm:block-ref-err}
Let $R_b$ denote the block reference given by \cref{def:block_ref}. For $k_1,k_2 \in \{0,\dots,m-1\}$,
\begin{align}
S_{R_b}(k_1,k_2)= \frac{1}{m^4} \brac{1 + 2\paren{n-1} \paren{1-\cos \paren{2\pi k_1 /m}}} \brac{1 + 2\paren{n-1} \paren{1-\cos \paren{2\pi k_2 /m}}}.
\end{align}
\end{theorem}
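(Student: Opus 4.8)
The plan is to mirror the proofs of \cref{thm:holog-ref-err,thm:slit-ref-err}, exploiting that the block reference's inverse transform matrix factors as a Kronecker product. First I would invoke \cref{eqn:M-R-B-inv}, which gives $M_{R_b}^{-1} = D_n \otimes D_n$, and substitute this into the definition of $T_R$ in \cref{eq:tr_def}. Applying the mixed-product property of Kronecker products exactly as in the slit proof, I expect
\begin{align*}
T_{R_b} = \frac{1}{m^2}\paren{D_n \otimes D_n}\paren{(P_2F_{RA}^*)\otimes(P_1F_{LA}^*)} = \frac{1}{m^2}\paren{D_n P_2 F_{RA}^*}\otimes\paren{D_n P_1 F_{LA}^*}.
\end{align*}
This is the key structural simplification: the block reference decouples into two independent one-dimensional difference operations, one acting on each tensor factor.

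Next I would apply \cref{rem:S-R-col-norm} together with \cref{lem:diag-kron-fact} to obtain
\begin{align*}
S_{R_b}(k_1,k_2) = \frac{1}{m^4}\norm{\paren{D_n P_2 F_{RA}^*}(:, k_1)}{}^2 \norm{\paren{D_n P_1 F_{LA}^*}(:, k_2)}{}^2,
\end{align*}
reducing the problem to two scalar column-norm computations. The second factor is identical to the one already evaluated in the proof of \cref{thm:slit-ref-err}, so it equals $1 + 2\paren{n-1}\paren{1-\cos\paren{2\pi k_2/m}}$ verbatim. For the first factor I would write out $\paren{P_2 F_{RA}^*}(:, k_1)$ as a vector of $n$ consecutive unit-modulus complex exponentials in the frequency $k_1$ (the precise range of the offset $t$ following from the definitions of $P_2$ and $F_{RA}$ in \cref{eq:center_fft}), and then apply $D_n$ exactly as in the slit computation.

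The one point to verify carefully is that this first column norm yields precisely $1 + 2\paren{n-1}\paren{1-\cos\paren{2\pi k_1/m}}$, symmetric to the second factor. The only difference from the slit case is that the exponentials in $\paren{P_2 F_{RA}^*}(:, k_1)$ carry a nonzero phase offset, since their index range does not include $0$, whereas in the slit computation the boundary entry was $e^{0}=1$. I expect this to be harmless: every entry has unit modulus, so the boundary term contributes $1$ regardless of phase, and each consecutive difference $\abs{e^{-2\pi i t k_1/m} - e^{-2\pi i (t-1) k_1/m}}^2 = 2 - 2\cos\paren{2\pi k_1/m}$ depends only on the ratio of adjacent entries, not on the offset. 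Summing the single boundary term and the $n-1$ difference terms recovers the claimed first factor, and multiplying the two factors gives the result. Hence I anticipate no genuine obstacle beyond bookkeeping the index ranges so that the phase-offset subtlety is correctly dispatched — the theorem is in essence the ``tensor square'' of the slit analysis.
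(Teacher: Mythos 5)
Your proposal is correct and takes essentially the same route as the paper's own proof: invoke $M_{R_b}^{-1}=D_n\otimes D_n$ from \cref{eqn:M-R-B-inv}, use the mixed-product property and \cref{rem:S-R-col-norm,lem:diag-kron-fact} to factor $S_{R_b}(k_1,k_2)$ into the two one-dimensional column norms $\norm{\paren{D_n P_2 F_{RA}^*}(:,k_1)}{}^2$ and $\norm{\paren{D_n P_1 F_{LA}^*}(:,k_2)}{}^2$, and evaluate each as in the slit case. Your explicit treatment of the phase offset in $\paren{P_2 F_{RA}^*}(:,k_1)$ (unit modulus of the boundary entry, offset-invariance of the consecutive differences) is precisely the detail hidden in the paper's phrase ``performing analogous calculation,'' and you dispatch it correctly.
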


\begin{proof}
As shown in \cref{eqn:M-R-B-inv},
\begin{align*}
M_{R_b}^{-1} = D_n \otimes D_n.
\end{align*}
After applying the mixed-product property of the Kronecker product and also~\cref{lem:diag-kron-fact} analogously to the above proof of the slit reference, it is clear that we only have to calculate $\norm{\paren{D_n P_1F_{LA}^*}(:, k_2)}{}^2$ and $\norm{\paren{D_n P_2F_{RA}^*}(:, k_1)}{}^2$. Performing analogous calculation as in the slit reference case then completes the proof.
\end{proof}

Note that $S_{R_b}$ achieves the uniform lower bound (i.e., $1/m^4$) around $k_1 = k_2 =0$.

\subsection{Reference Design Optimality} \label{subsec:design-thry}
For a fixed specimen $X$, we may view the recovery error given by \cref{eqn:linear-exp-fund} as an objective (i.e. cost) function whose variables are the reference values. From this perspective, each of the special cases considered exhibits a unique characteristic, as we discuss below.

\begin{figure}[!htbp] \label{ref-plot}
    \centering
        \begin{subfigure}[b]{0.31\textwidth}
        \includegraphics[width=\textwidth]{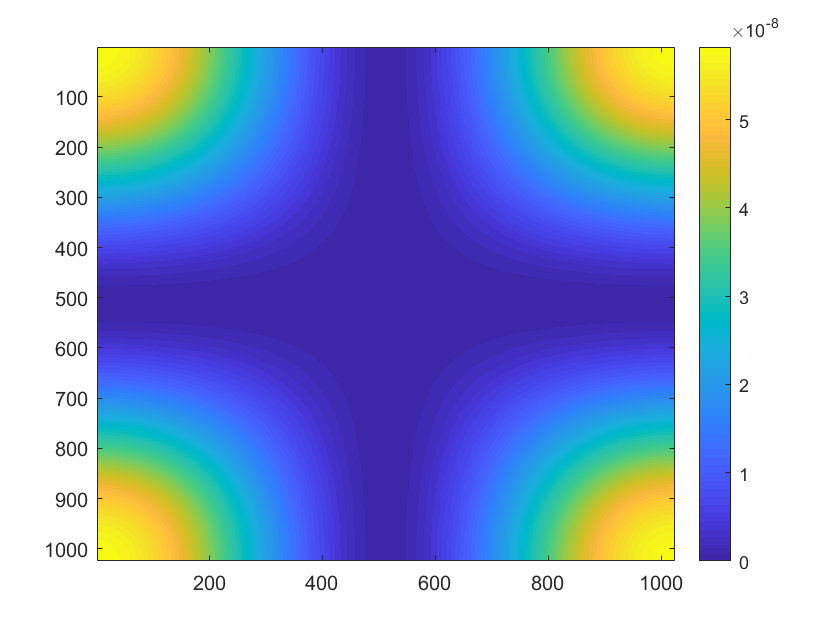}
        \caption{}
        \label{freq-Sb}
    \end{subfigure}
        \begin{subfigure}[b]{0.31\textwidth}
        \includegraphics[width=\textwidth]{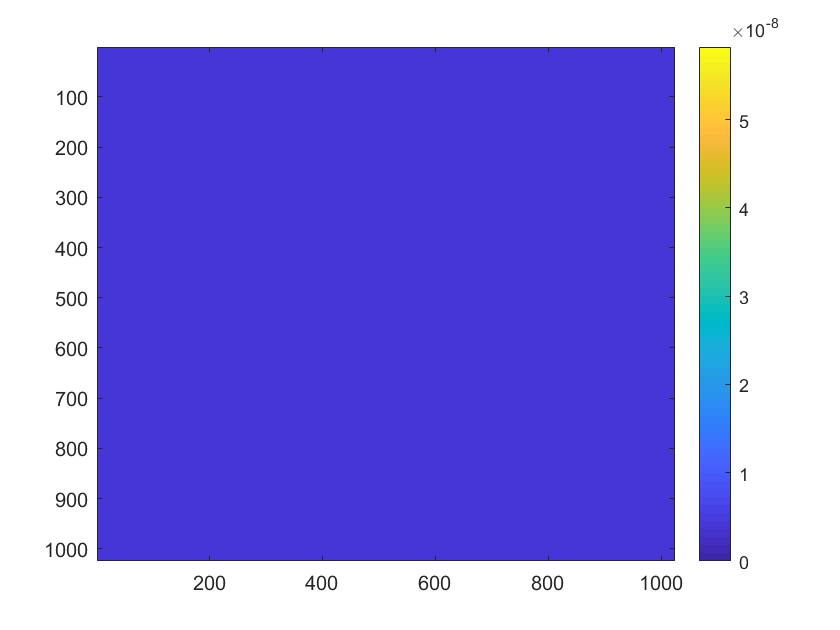}
        \caption{}
        \label{freq-Sp}
    \end{subfigure}
        \begin{subfigure}[b]{0.31\textwidth}
        \includegraphics[width=\textwidth]{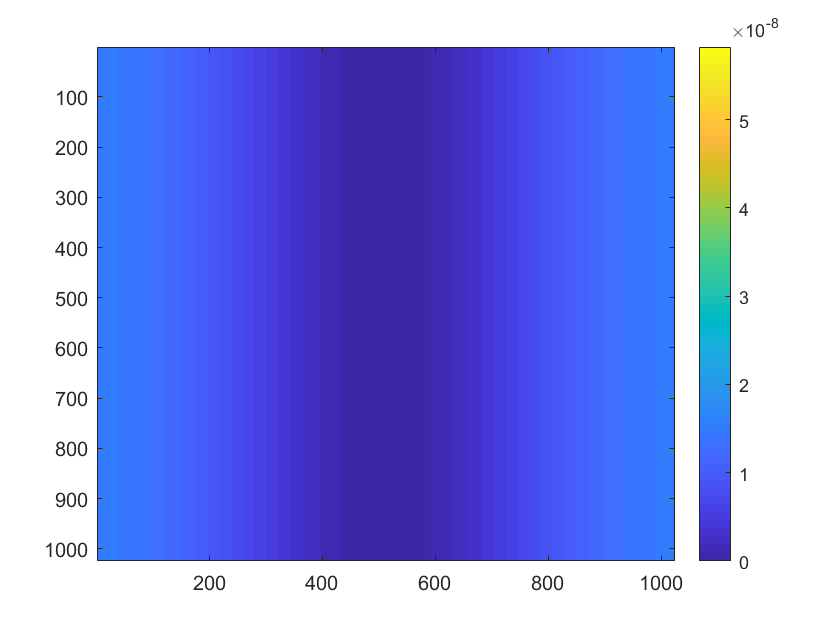}
        \caption{}
        \label{freq-Ss}
    \end{subfigure}
    \begin{subfigure}[b]{0.45\textwidth}
        \includegraphics[width=0.95\textwidth]{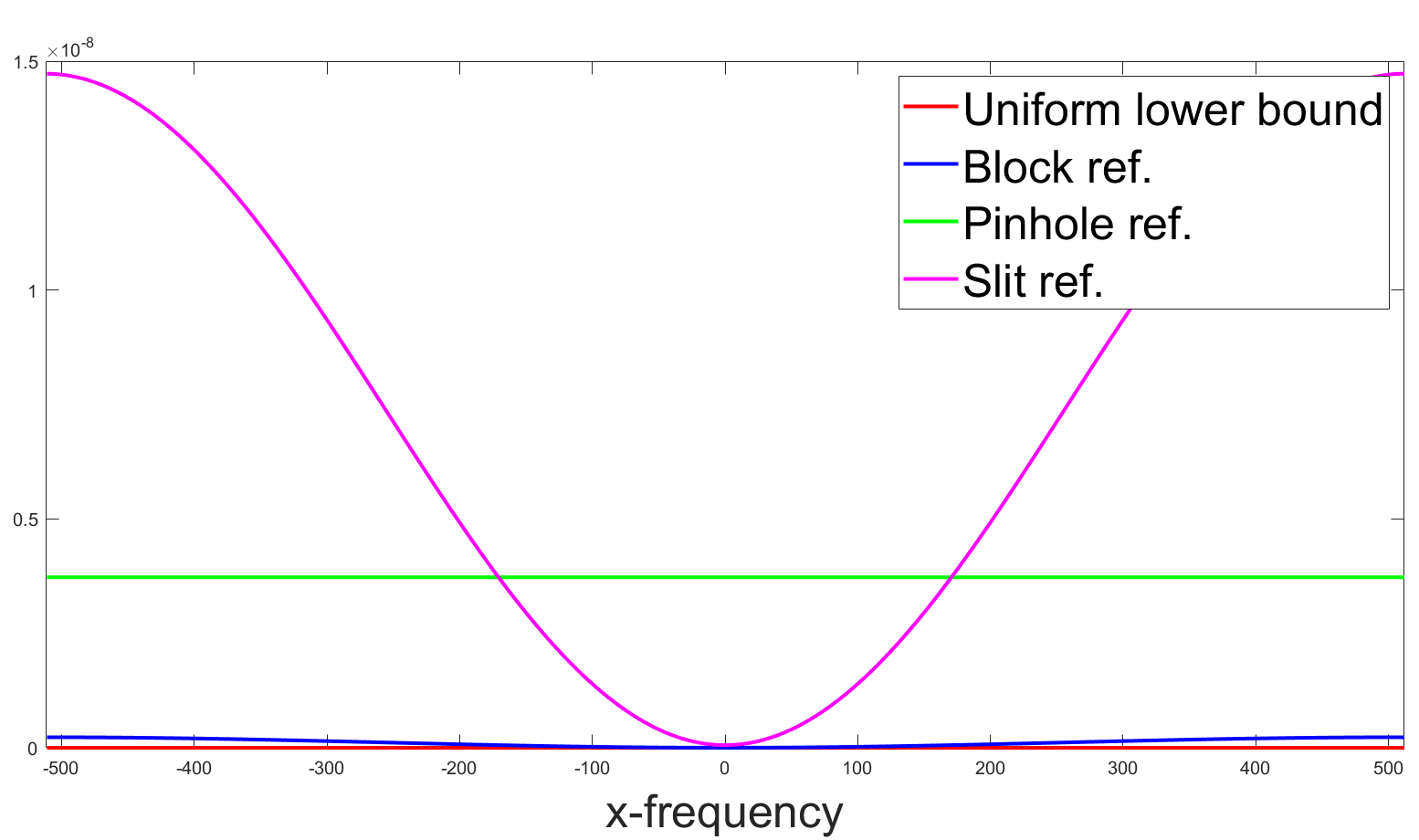}
        \caption{}
          \label{freq-scale-x}
    \end{subfigure}
            \begin{subfigure}[b]{0.45\textwidth}
        \includegraphics[width=0.95\textwidth]{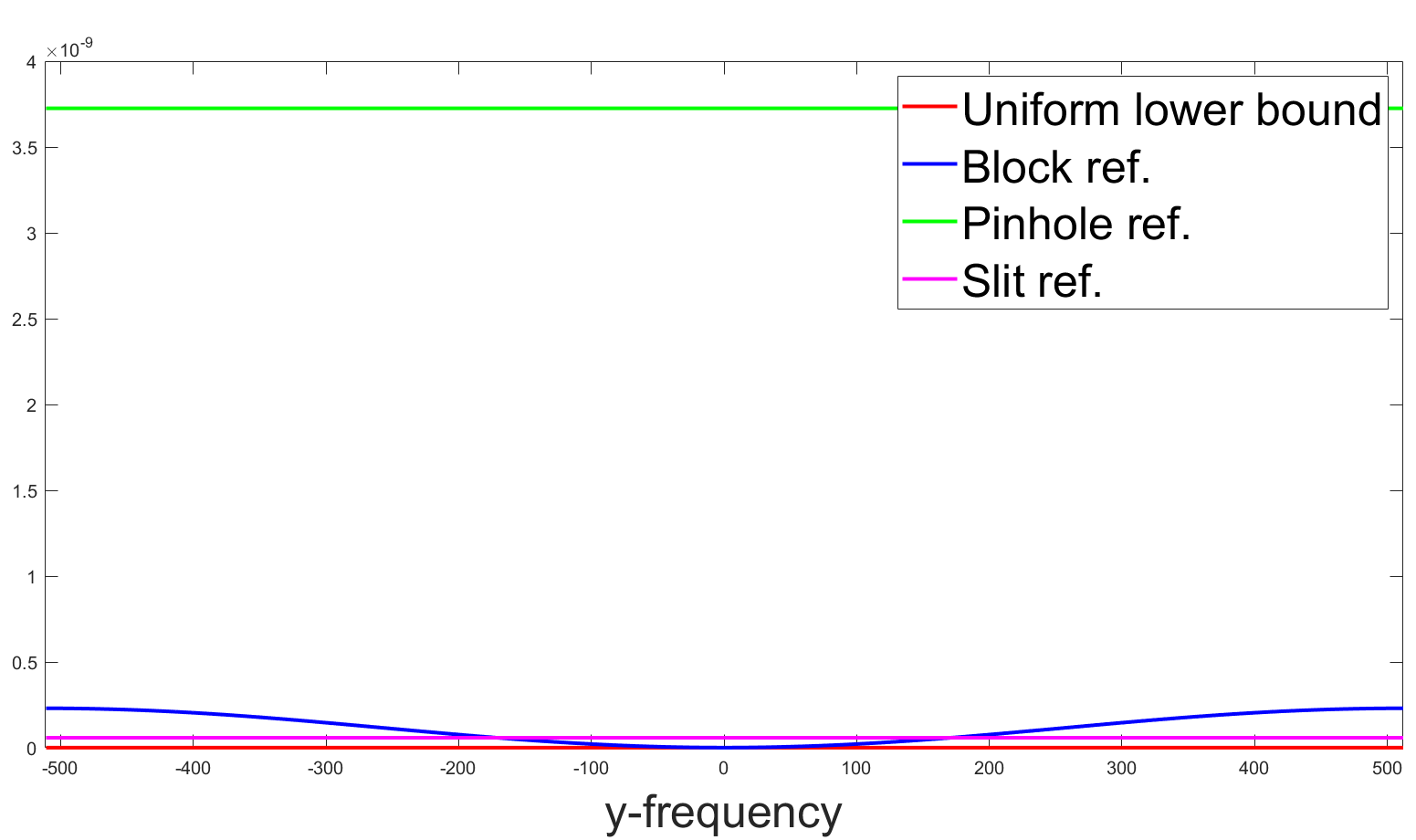}
        \caption{}
        \label{freq-scale-y}
    \end{subfigure}
    \caption{Plot of $S_R$ for the three special references and the uniform lower bound established in \cref{thm-lower-bound}.  \cref{freq-Sb},\cref{freq-Sp},\cref{freq-Ss} show the $S_R$ for the block, pinhole, and slit references, respectively. \cref{freq-scale-x} and \cref{freq-scale-y} show the cross-sections cutting the origin and parallel to the $x$ and $y$ directions, respectively. While the pinhole reference induces a flat scaling to the entire spectrum (as predicted in~\cref{thm:holog-ref-err}), the values for the block references are small at the low frequencies, and grow toward larger values moving into high frequencies. The slit reference interpolates the behaviors across directions: along the $y$-axis, the values are constant---same as the pinhole reference, but along the $x$-axis, the value increases as the frequency grows, similarly to the block reference.
    }
    \label{ref-plot}
\end{figure}

Note that these observations describe how the reference scaling factor $S_R$ depends on the reference choice $R$. This only partially describes the effect of the reference choice $R$ on the expected error given in \cref{eqn:linear-exp-fund}. Of course, the choice of $R$ will also affect $Y$ itself. However, the effect of $R$ on the former will typically far outweigh the latter by several orders of magnitude. We provide a sketch argument explaining this below.

In~\cref{eqn:linear-exp-fund}, both $Y = |\wh{[X, R]}|^2$ and $S_R$ depend on the reference $R$, and both $Y$ and $S_R$ contribute to the expected error.
\begin{itemize}
\item For $Y$, note that
\begin{align}  \label{eq:y_expand_sq}
|\wh{[X, R]}|^2
& = |\wh{[X, 0]} + \wh{[0, R]}|^2  \quad \text{(Fourier transform is a linear operator)} \\
& = |\wh{[X, 0]}|^2 + |\wh{[0, R]}|^2 + 2 \Re\paren{\wh{[X, 0]} \odot \wh{[0, R]}},
\end{align}
where $\odot$ denotes the elementwise Hadamard product. Moreover,
\begin{align}
\norm{Y}{1} = \|\wh{[X, R]}\|_F^2 = m^2 \|[X, R]\|_F^2 = m^2 \norm{X}{F}^2 + m^2\norm{R}{F}^2,
\end{align}
where the second equality follows from Parseval's theorem.

\item The $S_R$ term changes significantly across the references: for example, on a $64 \times 64$ image, the zero-frequency scaling term $S_R(0, 0)$ for the block reference is $1/64$ of that for the slit reference, and is $1/64^2$ of that for the pinhole reference, as implied by~\cref{thm:holog-ref-err,thm:slit-ref-err,thm:block-ref-err}. By~\cref{thm:holog-ref-err}, the pinhole reference induces a ``flat'' weighting scheme with a uniform weight $n^2/m^4$. By contrast, the weights induced by the block reference are frequency-varying (\cref{thm:block-ref-err}): when one of $k_1$ and $k_2$ is reasonably small, the weights are on the order $O(n/m^4)$, and when both are small, the weights are on the optimal order $O(1/m^4)$, which matches the lower bound given by~\cref{thm-lower-bound}. The weights induced by the slit reference interpolate the previous two in different directions: for a fixed $k_2$, the weight is constant and the behavior matches that of the pinhole reference, whereas the behavior is similar to that of the block reference when $k_2$ changes. The weighting behaviors of the three references are demonstrated in~\cref{ref-plot}.

\end{itemize}

To illustrate how \cref{eqn:linear-exp-fund} and the above facts can help provide insights into reference design and choice, we look at two stylized cases. For this discussion, reference choice is confined to the three special references we discussed above.
\begin{itemize}
\item \textbf{Case I: Spectrum of $X$ concentrates on (super) low-frequency bands}. A good example is when $X = \mb 1_{n \times n}$. We think of $X$ as ``flat'' and has values on the order of $\Theta(1)$. So $\norm{X}{F}^2 \in \Theta (n^2)$. Then whatever the choice of $R$, $\norm{X}{F}^2 + \norm{R}{F}^2 \in \Theta(n^2)$. So the contribution by $\norm{Y}{1}$ only differs across the three references by a small constant factor. Moreover, by \cref{eq:y_expand_sq}, $Y$ is low-frequency dominant regardless of the reference. According to our above discussion about the weight distribution of $S_R$, using the block reference might be beneficial for this class of signals (depending on of course how concentrated the low-frequency components of $Y$ is).

\item \textbf{Case II: Spectrum of $X$ is flat or has significant medium- to high-frequency components}. An idealized example is when $X = \delta(0, 0)$, and we focus on the pinhole and block references. For either of them, the medium- to high-frequency components in $S_R$ are on the same order (i.e., $O(n^2/m^4)$ from \cref{thm:holog-ref-err} and \cref{thm:block-ref-err}), and in $Y$ are on the same order also (i.e., $O(1)$\footnote{Recall we focus on the medium- to high-frequency spectrum part here. This part for $|\wh{[X, 0]}|^2$ are all ones if $X = \delta(0, 0)$. Moreover, when $R = R_p$,  both $|\wh{[0, R]} |^2$ and $2 \Re \wh{[X, 0]} \odot \wh{[0, R]}$ are $O(1)$. When $R = R_b$, this part for $|\wh{[0, R]} |^2$ are also $O(1)$ in magnitude (think of the quick decay behavior of the $\mathrm{sinc}$ function), and hence $2 \Re \wh{[X, 0]} \odot \wh{[0, R]}$ are $O(1)$ as well.   }). For the low-frequency part,  $S_{R_b}$ are $O(1/m^4)$, and $S_{R_p}$ are $O(n^2/m^4)$. Moreover, for the low-frequency part of $Y$, $|\wh{[X, R_b]}|^2$ are dominated by  $|\wh{[0, R_b]}|^2$, which is $O(n^2)$, and $|\wh{[X, R_p]}|^2$ are $O(1)$. Thus, $\innerprod{S_R}{Y}$ are on the same order whether we choose the block or the pinhole reference. So the final performance of the two is largely determined by $\norm{Y}{1}$. When $\norm{X}{F} \in o(n^2)$, say $\norm{X}{F} = 1$ (for $\delta(0, 0)$), obviously the pinhole reference is more favorable, as $\norm{R_p}{F}^2 = 1$ whereas $\norm{R_b}{F}^2 = n^2$.  But if $\norm{X}{F}^2 \in \Theta(n^2)$, we do not expect substantial differences.
\end{itemize}
It is natural to expect a smooth transition of the behaviors moving from the super-flat signal $\mb 1_{n \times n}$ to the super sharp $\delta(0, 0)$. We confirm the differential behaviors of the references empirically in~\cref{sec:exp}.

\subsection{Error Perturbation} \label{sec:pert_bound}
Our error formula \cref{eqn:linear-exp-fund} gives only the expected squared recovery error, and our results in \cref{sec:analysis_special_cases} and particularly the optimality analysis in \cref{subsec:design-thry} are derived based on this expected error. One may wonder if it is reasonable to base the analysis solely on expectation. The following perturbation bound provides some insights in this direction.

\begin{theorem} \label{thm:conc-bound}
For each of the pinhole, slit, and block references, and for the Poisson noise model on $\widetilde{Y}$ given by \cref{eqn:data}, the following holds. For all $t > 0$,
\begin{multline}
    \prob{    \abs{\norm{\wt{X} - X}{F}^2 - \bb E \norm{\wt{X} - X}{F}^2} \ge t} \\
    \le 2\exp\set{-c\min\paren{\frac{N_p m t^{1/2}}{\sqrt{3N_p \norm{Y}{1} \norm{Y}{\infty}} + 2\norm{Y}{1}} , \frac{N_p^4 m^4 t^2}{n^2 \paren{\sqrt{3N_p \norm{Y}{1} \norm{Y}{\infty}} + 2\norm{Y}{1}}^4 }}},
\end{multline}
where $c>0$ is a universal constant.
\end{theorem}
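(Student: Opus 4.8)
The plan is to recognize the squared recovery error as a quadratic form in the independent, centered, Poisson-type measurement noise, and to apply a Hanson--Wright-type concentration inequality adapted to \emph{sub-exponential} (rather than sub-Gaussian) variables. The heavy $\exp(-c\sqrt t)$ tail of such a quadratic chaos is exactly what produces the $t^{1/2}$ exponent appearing in the first regime of the claimed bound, while the $t^2$ exponent is the usual Gaussian-scale behavior near the mean.

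First I would reduce to a real quadratic form. Since $\vect(\wt X)=T_R\vect(\wt Y)$ and $\vect(X)=T_R\vect(Y)$ by \cref{eqn:fundamental-vec}, setting $w\doteq\vect(\wt Y)-\vect(Y)\in\R^{m^2}$ gives $\norm{\wt X-X}{F}^2=\norm{T_R w}{}^2=w^\top A\,w$, where $A\doteq\Re(T_R^*T_R)\succeq 0$ (the imaginary part is skew-symmetric and drops out because $w$ is real), and $\norm{A}{\op}\le\norm{T_R^*T_R}{\op}$, $\norm{A}{F}\le\norm{T_R^*T_R}{F}$. Note $\expect{w^\top A w}$ is precisely the expected squared error of \cref{eqn:linear-exp-fund}, so the concentration is centered at the right quantity. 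Under \cref{eqn:data} the coordinates are independent with $w_k=(\norm{Y}{1}/N_p)(Z_k-\lambda_k)$, $Z_k\sim\mathrm{Pois}(\lambda_k)$, $\lambda_k=N_p Y(k)/\norm{Y}{1}$. Using the Poisson cumulant $\log\expect{e^{s(Z_k-\lambda_k)}}=\lambda_k(e^s-1-s)$ together with $e^x-1-x\le \tfrac{x^2/2}{1-\abs{x}/3}$, each $w_k$ is sub-gamma with variance proxy $\nu_k=Y(k)\norm{Y}{1}/N_p$ and scale $b=\norm{Y}{1}/(3N_p)$; hence $\norm{w_k}{\psi_1}\lesssim\sqrt{\nu_k}+b$, and after inserting $\nu_{\max}\le\norm{Y}{\infty}\norm{Y}{1}/N_p$ one may take $\max_k\norm{w_k}{\psi_1}\le K\doteq B/N_p$ with $B=\sqrt{3N_p\norm{Y}{1}\norm{Y}{\infty}}+2\norm{Y}{1}$. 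The constants $3$ and $2$ in $B$ are exactly those delivered by the elementary MGF bound and by the sub-gamma-to-Orlicz conversion.

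Next I would invoke the quadratic-chaos concentration inequality for $\psi_1$ (i.e.\ $\alpha=1$) variables (as in Adamczak--Wolff or G\"otze--Sambale--Sinulis), which for a fixed symmetric $A$ and $K=\max_k\norm{w_k}{\psi_1}$ reads
\[
\prob{\abs{w^\top A w-\expect{w^\top A w}}\ge t}
\le 2\exp\!\paren{-c\min\paren{\frac{t^2}{K^4\norm{A}{F}^2},\ \frac{t^{1/2}}{K\,\norm{A}{\op}^{1/2}}}}.
\]
Standard statements also carry an intermediate linear term $t/(K^2\norm{A}{\op})$; since $\norm{A}{\op}\le\norm{A}{F}$ one checks this term is dominated by the minimum of the other two at every $t$, so it may be discarded without weakening the bound. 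It remains to estimate $\norm{A}{\op}$ and $\norm{A}{F}$ for the three references using the Kronecker factorizations of \cref{sec:analysis_special_cases}. Each $T_R=\tfrac1{m^2}B_1\otimes B_2$ with factors of the form $P_iF^*$ or $D_nP_iF^*$; because the columns of $F_{LA}^*,F_{RA}^*$ have norm $\sqrt m$ one has $(P_iF^*)(P_iF^*)^*=m I_n$ and $\norm{P_iF^*}{\op}=\sqrt m$, while $\norm{D_n}{\op}\le 2$. This yields $\norm{T_R}{\op}\le C/m$, hence $\norm{A}{\op}\le C/m^2$; and computing $T_RT_R^*$ explicitly (e.g.\ $T_{R_p}T_{R_p}^*=m^{-2}I_{n^2}$ and $T_{R_b}T_{R_b}^*=m^{-2}(D_nD_n^*)\otimes(D_nD_n^*)$) gives $\norm{A}{F}^2=\norm{T_RT_R^*}{F}^2\le C n^2/m^4$. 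Substituting $K=B/N_p$, $\norm{A}{\op}\le C/m^2$, and $\norm{A}{F}^2\le C n^2/m^4$, and absorbing the (reference-dependent, $O(1)$) constants $C$ into the universal $c$, reproduces exactly the two stated regimes $N_p m t^{1/2}/B$ and $N_p^4 m^4 t^2/(n^2 B^4)$.

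The main obstacle is choosing and justifying the correct quadratic-form inequality: ordinary (sub-Gaussian) Hanson--Wright yields $t$ and $t^2$ regimes, but the Poisson noise is only sub-exponential, so the $\alpha=1$ variant with its $\exp(-c\sqrt t)$ far tail is essential and is what forces the $t^{1/2}$ term. Secondary but fiddly work lies in (i) tracking the explicit constants through the Poisson MGF so that $B/N_p$ is a clean upper bound on $\max_k\norm{w_k}{\psi_1}$, and (ii) verifying the operator- and Frobenius-norm estimates reference by reference; all the over-estimates above are in the conservative direction (larger $K$, $\norm{A}{\op}$, $\norm{A}{F}$ only weaken the bound), so a valid, possibly loose, inequality of the stated shape results for all three references with a single universal $c$.
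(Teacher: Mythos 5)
Your proposal is correct and takes essentially the same route as the paper's own proof: both write the squared error as a quadratic form in the centered Poisson noise vector, bound the sub-exponential ($\psi_1$) norm of each coordinate starting from the Poisson Chernoff/MGF bound, invoke the Hanson--Wright-type inequality for sub-exponential chaos (the same G\"{o}tze--Sambale--Sinulis proposition the paper cites), and control the operator and Frobenius norms of $T_R^*T_R$ through the Kronecker factorizations to obtain the two stated regimes. Your additional care in passing to $\Re(T_R^*T_R)$ (the chaos inequality being stated for real symmetric matrices) and in verifying that the intermediate linear regime is dominated by the minimum of the other two are correct refinements of details that the paper's proof passes over silently.
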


Proof of this theorem is included in \cref{sec:append-pf}. To make sense of this theorem, let us plug in some practical values for the parameters: $t = n^2 t_0$ (i.e., $t_0$ is the average pixel-wise perturbation), $N_p = C_0 m^2$ for a large constant $C_0$, $\norm{Y}{1}$ is typically $O(m^2n^2)$ and $\norm{Y}{\infty}$ is typically $O(n^2)$ (i.e., the spectrum is not too peaky). Then we have 
\begin{align}
\frac{N_p m t^{1/2}}{\sqrt{3N_p \norm{Y}{1} \norm{Y}{\infty}} + 2\norm{Y}{1}}
& = c_1 C_0\frac{m}{n} t_0^{1/2}, \\
\frac{N_p^4 m^4 t^2}{n^2 \paren{\sqrt{3N_p \norm{Y}{1} \norm{Y}{\infty}} + 2\norm{Y}{1}}^4 }
& = c_2 C_0^4 \frac{m^4}{n^6} t_0^2. 
\end{align}
Evidently, making $C_0$ large and $m$ large (relative to $n$) improves the concentration. For example, in our experiment below, $C_0 \gg 100 > n$ and $m \ge 10n$, making the above exponents reasonably large and hence the tail probability negligibly small even for small $t_0$.

\section{Numerical Simulations} \label{sec:exp}
We perform numerical experiments on two sets of data to illustrate the effectiveness of the referenced deconvolution algorithm (\cref{sec:ref_deconv}), and to corroborate the theoretical prediction on optimal reference design (\cref{subsec:design-thry}). Codes for these experiments are available at \url{https://github.com/sunju/REF_CDI}.

\subsection{On the Mimivirus Image}
\begin{figure}[!htbp] \label{fig:alg-compar}
    \centering
    \includegraphics[width=0.85\textwidth]{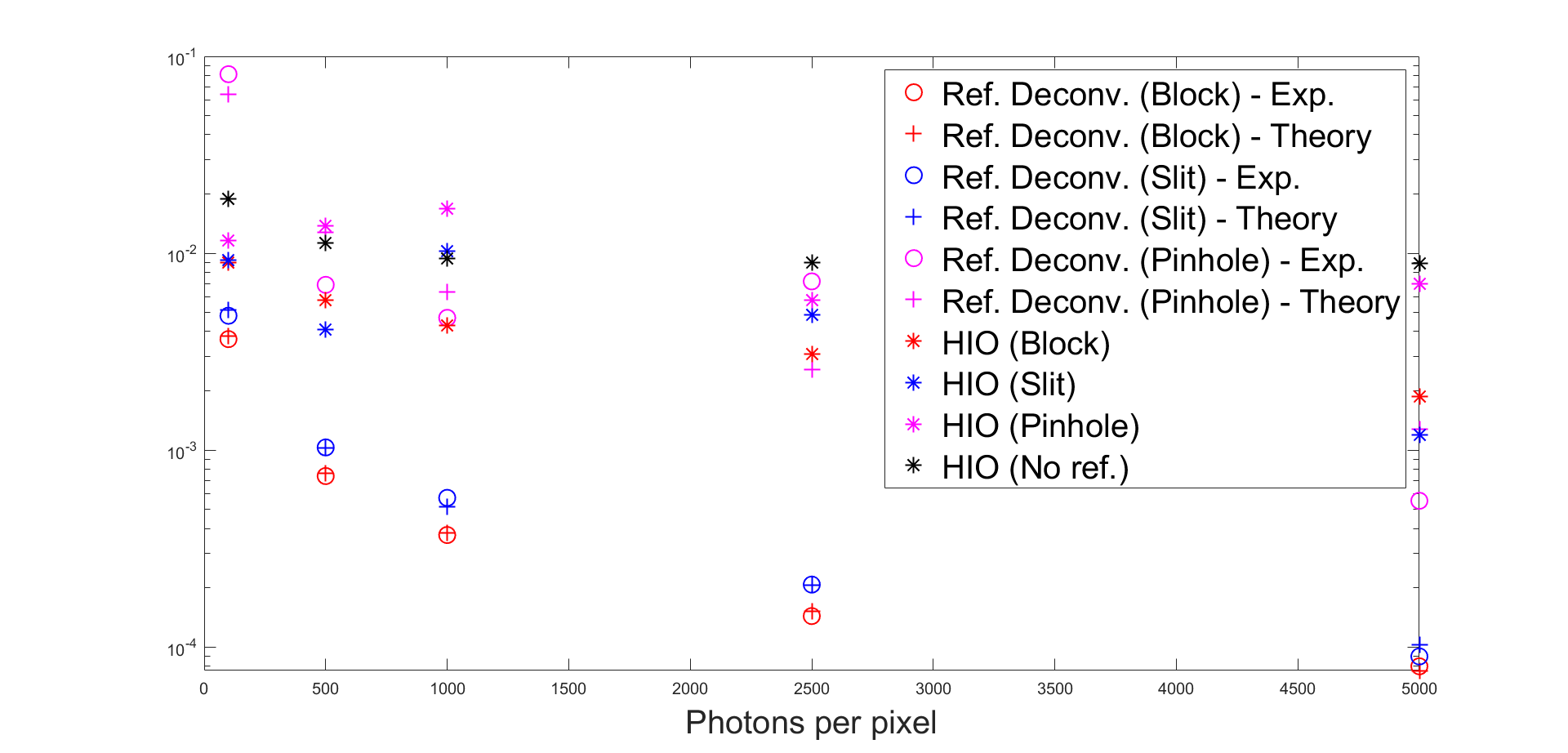}
    \caption{Plot of the relative recovery error against the nominal photon level for experiments on the mimivirus image. Result from both the referenced deconvolution and the classic HIO algorithms are included. For the referenced deconvolution algorithm, both the expected and empirical recovery errors are presented. The referenced deconvolution algorithm combined with the block reference performs consistently better than other algorithm-reference combinations. Also, for each reference, the expected recovery error closely matches the empirical recovery error, as predicted by \cref{thm:conc-bound}.}
    \label{recov}
\end{figure}

\begin{figure}[!htbp] \label{fig:mimi-recov}
    \centering
    \begin{subfigure}[b]{0.3\textwidth}
        \includegraphics[width=\textwidth]{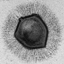}
        \caption{Ground-truth\\ image}
        \label{gnd-truth}
    \end{subfigure}
    \begin{subfigure}[b]{0.3\textwidth}
        \includegraphics[width=\textwidth]{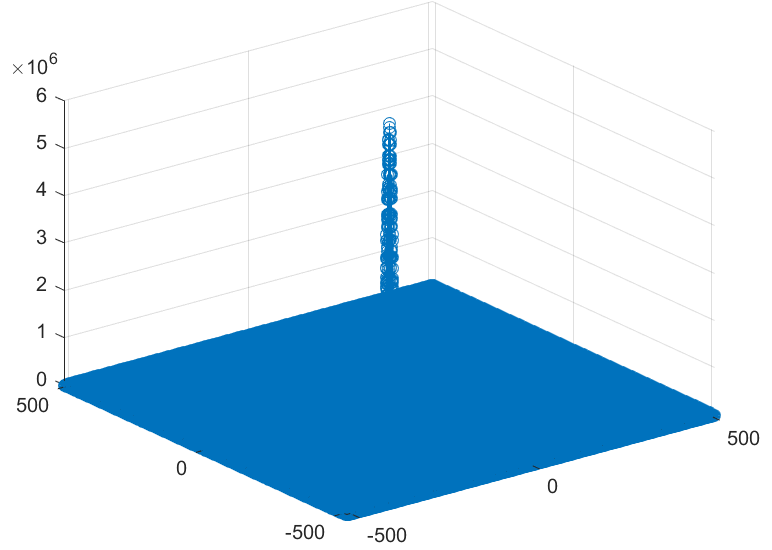}
        \caption{Fourier magnitude\\ of the groundtruth}
        \label{gnd-truth-fourier}
    \end{subfigure}
    \begin{subfigure}[b]{0.3\textwidth}
       \includegraphics[width=\textwidth]{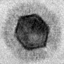}
       \caption{HIO (no ref.) \\$\eps = 93.794$, $\expect{\eps}$ NA}
       \label{HIO}
   \end{subfigure}
   \\
    \begin{subfigure}[b]{0.3\textwidth}
        \includegraphics[width=\textwidth]{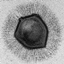}
        \caption{Ref. Deconv. with block ref. \\$\eps = 3.7029$, $\expect{\eps} = 3.7953$}
        \label{block_ref}
    \end{subfigure}
        \begin{subfigure}[b]{0.3\textwidth}
        \includegraphics[width=\textwidth]{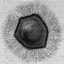}
        \caption{Ref. Deconv. with slit ref. \\$\eps = 5.7200$, $\expect{\eps} = 5.1467$}
        \label{L_ref}
    \end{subfigure}
        \begin{subfigure}[b]{0.3\textwidth}
        \includegraphics[width=\textwidth]{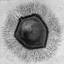}
        \caption{Ref. Deconv. with pinhole ref. \\$\eps = 46.966$, $\expect{\eps} = 63.835$}
        \label{pinhole_ref}
    \end{subfigure}
    \begin{subfigure}[b]{0.3\textwidth}
        \includegraphics[width=\textwidth]{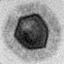}
        \caption{HIO with block ref. \\$\eps = 42.813$, $\expect{\eps}$ NA}
        \label{HIO}
    \end{subfigure}
        \begin{subfigure}[b]{0.3\textwidth}
        \includegraphics[width=\textwidth]{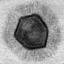}
        \caption{HIO with slit ref. \\$\eps = 102.28$, $\expect{\eps}$ NA}
        \label{HIO}
    \end{subfigure}
    \begin{subfigure}[b]{0.3\textwidth}
        \includegraphics[width=\textwidth]{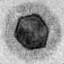}
        \caption{HIO with pinhole ref.\\$\eps = 168.18$, $\expect{\eps}$ NA}
        \label{HIO}
    \end{subfigure}
        \caption{Recovery result of the mimivirus image using various recovery schemes, and the corresponding relative recovery errors (\textbf{all errors should be rescaled by $10^{-4}$}). The photon level is fixed at $N_{pp}=1000$. Referenced deconvolution achieves superior recovery to HIO, both with and without the reference information enforced. Experimental and theoretical relative errors for referenced deconvolution closely match, as predicted by \cref{thm:conc-bound}. The block reference achieves the best recovery.}
    \label{fig:mimi-recov}
\end{figure}

In this experiment, the specimen $X$ is the mimivirus image \cite{Mimivirus}, and its spectrum mostly concentrates on very low frequencies, as shown in~\cref{gnd-truth-fourier}. The image size is $64 \times 64$, and the pixel values are normalized to $[0, 1]$. For the referenced setup, a reference $R$ of size $64 \times 64$ is placed next to $X$, forming a composite specimen $[X, R]$ of size $64 \times 128$.  Three references, i.e., the pinhole, the slit, and the block references are considered. The oversampled Fourier transform is taken to be of size $1024 \times 1024$, and  the collected noisy data $\widehat{Y}$ obeys the Poisson shot noise model defined in~\cref{eqn:data}. For this model, the nominal number of total photons $N_p$ is given by $N_{pp} \times 1024^2$, where $N_{pp}$ can be understood to be the average number of photons to be received by each pixel. We investigate the regime where $N_{pp}$ varies from $100$ to $5000$ ($N_{pp} = 100,500,1000,2500,5000$, respectively), with one simulation trial run for each $N_{pp}$ value.

We run the referenced deconvolution algorithm and also the classic HIO algorithm with and without enforcing the known reference for comparison. The results are presented in  \cref{recov} and \cref{fig:mimi-recov}. We define the relative (squared) recovery error to be
\begin{align}
\eps \doteq \frac{\norm{X - \widehat{X}}{}^2}{\norm{X}{}^2}.
\end{align}

From~\cref{recov}, it is evident that for the referenced deconvolution schemes, the expected and empirical relative recovery errors are close (justified by the perturbation result in \cref{sec:pert_bound}). Moreover, referenced deconvolution combined with the block reference performs the best among all the algorithm and reference combinations\textemdash regardless of the photon per pixel level $N_{pp}$. The superiority of the block reference among the referenced deconvolution schemes agrees with the prediction in~\cref{subsec:design-thry}, as the spectrum of $X$ sharply concentrates on very low frequencies. In addition, for the referenced deconvolution schemes, the recovery errors generally decrease as the photon level (dictated by $N_{pp}$) increases. This trend is clearly predicted by~\cref{eqn:linear-exp-fund}: because only $N_p = N_{pp} \times 1024^2$ depends on $N_{pp}$, the expected squared error is proportional to $1/N_{pp}$. The relative errors and recovered images for $N_{pp} = 1000$ are exhibited in~\cref{fig:mimi-recov}.

In regards to algorithm runtime for our experiments, the referenced deconvolution algorithm runs in less than 0.001 seconds for all three reference choices considered. The runtime for HIO is about 0.2 seconds per iteration, with the iteration with the smallest relative error selected from 1000 iterations.

\subsection{On a ``Flat-Spectrum'' Image} \label{sec:flat-spec-exp}

\begin{figure}[!htbp] \label{Flat}
    \centering
    \begin{subfigure}[b]{0.3\textwidth}
        \includegraphics[width=\textwidth]{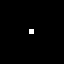}
        \caption{Ground-truth \\ image}
        \label{gnd-truth}
    \end{subfigure}
    \begin{subfigure}[b]{0.3\textwidth}
        \includegraphics[width=\textwidth]{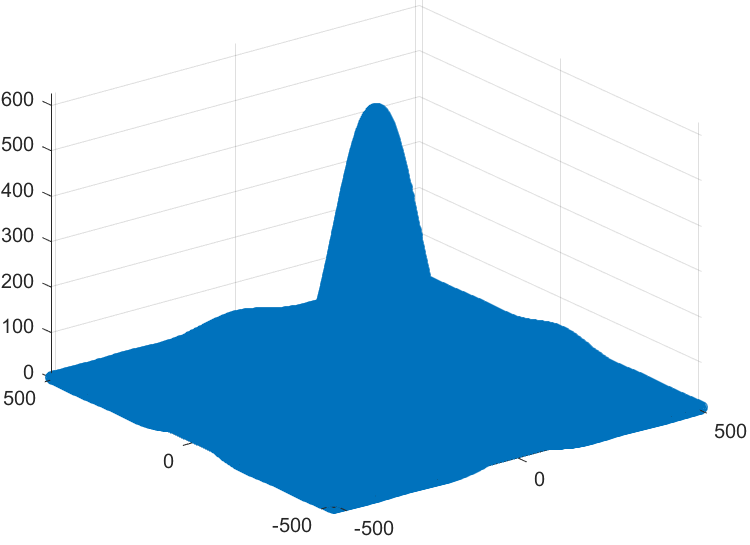}
        \caption{Fourier magnitude\\ of the groundtruth}
        \label{gnd-truth-fourier}
    \end{subfigure}
    \begin{subfigure}[b]{0.3\textwidth}
        \includegraphics[width=\textwidth]{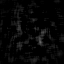}
        \caption{HIO (no ref.)\\$\eps = 21577$, $\expect{\eps}$ NA}
        \label{HIO}
    \end{subfigure}
    \\
    \begin{subfigure}[b]{0.3\textwidth}
        \includegraphics[width=\textwidth]{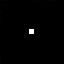}
        \caption{Ref. Deconv. with block ref. \\$\eps =  91.203$, $\expect{\eps} = 83.723$}
        \label{block_ref}
    \end{subfigure}
        \begin{subfigure}[b]{0.3\textwidth}
        \includegraphics[width=\textwidth]{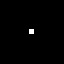}
        \caption{Ref. Deconv. with slit ref. \\ $\eps = 1.8769$, $\expect{\eps} = 1.8496$}
        \label{L_ref}
    \end{subfigure}
        \begin{subfigure}[b]{0.3\textwidth}
        \includegraphics[width=\textwidth]{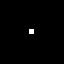}
        \caption{Ref. Deconv. with pinhole ref. \\ $\eps= 1.0609$, $\expect{\eps} = 1.0609$}
        \label{pinhole_ref}
    \end{subfigure}
    \begin{subfigure}[b]{0.3\textwidth}
        \includegraphics[width=\textwidth]{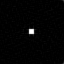}
        \caption{HIO with block ref. \\ $\eps = 760.10$, $\expect{\eps}$ NA}
        \label{HIO}
    \end{subfigure}
        \begin{subfigure}[b]{0.3\textwidth}
        \includegraphics[width=\textwidth]{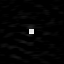}
        \caption{HIO with slit ref.\\ $\eps = 1453.1$, $\expect{\eps}$ NA}
        \label{HIO}
    \end{subfigure}
    \begin{subfigure}[b]{0.3\textwidth}
        \includegraphics[width=\textwidth]{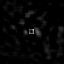}
        \caption{HIO with pinhole ref.\\ $\eps = 12855$, $\expect{\eps}$ NA}
        \label{HIO}
    \end{subfigure}

        \caption{Recovery result of the ``centered square'' image using various recovery schemes, and the corresponding relative recovery errors (\textbf{all errors should be rescaled by $10^{-4}$}). The photon level $N_{pp}$ is fixed as $1000$. Referenced deconvolution achieves superior recovery to HIO (with the reference information enforced). Experimental and theoretical relative errors for referenced deconvolution closely match, as predicted by \cref{thm:conc-bound}. The pinhole reference achieves the best recovery.}
    \label{Flat}
\end{figure}

In this experiment, the image contains a small centered square. Except for this, the basic experimental setup is identical to the above one. We focus on the case $N_{pp} = 1000$. In terms of recovery error, the referenced deconvolution schemes perform uniformly better than the HIO schemes, just as for the mimivirus image. For the current ``centered square'' image which has a considerably ``flat'' spectrum (see~\cref{gnd-truth-fourier}), however, the best-performing reference is the pinhole reference---again consistent with our theoretical prediction in~\cref{subsec:design-thry}. The detailed recovery results and recovery errors are presented in~\cref{Flat}.

\section{Discussion}
We have presented a general mathematical framework for the holographic phase retrieval problem, and proposed the referenced deconvolution algorithm as a generic solution scheme. Our formulation emphasizes the structure in the linear deconvolution procedure, and offers new insights into the resulting linear systems from popular reference choices.

We have also derived a general formula for the expected recovery error of the referenced deconvolution algorithm when the measurement data contains stochastic noise. Under a Poisson shot noise model, the formula allows us to compare popular reference choices and conclude that the block reference minimizes low-frequency contributions to the recovery error and is hence favorable for typical imaging data.

Building on our framework, it is possible to perform more detailed analysis of other noise models and reference choices. Also, the insights obtained here can likely motivate further design possibilities. In our follow-up work \cite{Dual-Ref}, one such new reference setup (termed the \textit{dual-reference}) is proposed, which provides superior noise stability across the frequency spectrum, as compared to popular single-reference setups.

Another possible extension is to include beamstops, which are often implemented in practical CDI experiments~\cite{HERALDO,Beamstop-theory}. Beamstops effectively remove a small fraction ($\le 5\%$) of low-frequency components from the measurements. We observe that the proposed referenced deconvolution algorithm is easily adapted to this setting, insofar as the missing data does not render the problem ill-conditioned.

\section{Acknowledgments}
The authors are grateful to the Simons Foundation Math+X initiative and the Natural Science and Engineering Research Council of Canada for providing support during our study. We would also like to sincerely thank Walter Murray, Gordon Wetzstein, and Jon Claerbout for ongoing valuable feedback towards developing this work.

\appendix \section{Connection with HERALDO}  \label{sec:connect_heraldo}
We sketch the correspondence between the Referenced Deconvolution algorithm and the HERALDO procedure~\cite{HERALDO}.

\subsection{Summary of HERALDO}
The HERALDO procedure (an acronym for ``Holography with Extended Reference by Autocorrelation Linear Differential Operation'') considers a continuous setup with a specimen $X(x,y)$ and reference $R(x,y)$ which together form the composite $F(x,y)=X(x,y)+R(x,y)$. Let $\star$ denote the cross-correlation.  It follows that
\begin{equation} \label{eqn:HERALDO-F-expansion}
F \star F = X \star X + R \star R + X \star R + R \star X.
\end{equation}
Provided that $R$ and $X$ satisfy certain separation conditions, the cross-correlation $R \ast X$ will not overlap with the other terms in \cref{eqn:HERALDO-F-expansion}. The HERALDO idea is to find an $n^{th}$ order weighted linear differential operator $\mathcal{L}_R^n$, i.e.,
\begin{align}
  \mc L^n_R = \sum_{k=0}^n a_k \frac{\partial^n}{\partial x^{n-k} \partial y^k}, \quad \text{where $a_k$'s are the weights}, 
\end{align}
 such that
\begin{equation} \label{eqn:HERALDO-op-defn}
\mathcal{L}^n_R(R)=A\delta(x-x_0, y-y_0) + o(x, y),
\end{equation}
for some constants $A,x_0,y_0$ and some offset function $o(x, y)$.
Now, applying the key identies 
\begin{equation} \label{eqn:HERALDO-identities}
        \mc L^{(n)} \paren{h \star g} = \mc L^{(n)} \paren{h} \star g = \paren{-1}^n h \star \mc L^{(n)} \paren{g}   \quad \forall\; h, g, 
\end{equation}
one can show that 
\begin{multline*}
\mc L^{(n)}_R \paren{F \ast F}
    = \mc L^{(n)}_R \paren{X \star X}+ \frac{1}{2} \brac{A R^*\paren{x_0 - x, y_0 - x} + o \star R} + \paren{-1}^n \frac{1}{2} \brac{A^* R\paren{x_0 + x, y_0 + x} + R \star o} +  \\
     \paren{-1}^n o \star X + X \star o + \paren{-1}^n A^* X\paren{x + x_0, y+ y_0} + A X^* \paren{x_0- x, y_0- y}.
\end{multline*}
Appropriate separation conditions on the supports (i.e., domains of nonzeros) of $X$,$R$, and $o$ then ensure either  $A^* X\paren{x + x_0, y+ y_0}$ or  $A X^* \paren{x_0- x, y_0- y}$ be separate from other terms spatially. This provides a scaled and shifted copy of $X$, thereby recovering the unknown specimen of interest.

Note that for an arbitrary reference $R$, determining whether such an $\mathcal{L}^n_R$ exists and how it can be constructed is a highly nontrivial problem. Nonetheless, for special references such as the pinhole, slit, and block, there are easy constructions as illustrated in~\cite{HERALDO}. 

\subsection{Connection to the Referenced Deconvolution Algorithm}
The HERALDO procedure seeks a continuous linear differential operator $\mathcal{L}^n_R$ such that $\mathcal{L}^n_R(R)=A\delta(x-x_0, y-y_0) + o(x, y)$. Consider the special case where $A=1$, $x_0=y_0=0$, and $o(x, y) =0$, whence 
\begin{equation} \label{eqn:HERALDO-cts-simple}
\mathcal{L}^n_R(R)=\delta(x,y).
\end{equation}
The paper~\cite{HERALDO} has derived the respective linear differential operators with $n \le 2$ for the pinhole, slit, and block references. For each of these three special references, we observe that $M_R^{-1}$ is exactly a finite-difference approximation to the linear differential operator derived in~\cite{HERALDO}\footnote{It is also possible that this correspondence can be extended to a larger class of reference choices. Indeed, it is easily seen that $M_R^{-1}\text{flip}(\vect(\overline{R}))=e_1$, whereby $M_R^{-1}$ plays the same role as does $\mathcal{L}^n_R$ in \cref{eqn:HERALDO-cts-simple}.}, which is elaborated below. 

\subsubsection{Pinhole Reference}
For the pinhole reference $R_p$, $\mathcal{L}^n_{R_p}$ is simply the identity operator, and hence its discretized version is simply the identity matrix, as is $M_{R_p}^{-1}$. 

\subsubsection{Slit Reference}
For the slit reference $R_s$, $\mathcal{L}^n_{R_s}$ is shown in Section 4.1 of~\cite{HERALDO} to be $\frac{\partial}{\partial y}$, i.e., vertical first-order differential operator, which can be approximated by $M_{R_s}^{-1}=I \otimes D_n$ (see \cref{eqn:M-R-p}), as $\paren{I \otimes D_n} \vect\paren{R_s} = \vect\paren{D_n R_s I}$. Here $D_n$ is the first-order difference matrix that performs finite differencing vertically.

\subsubsection{Block Reference}
For the slit reference $R_b$, $\mathcal{L}^n_{R_b}$ is shown in Section 4.3 of~\cite{HERALDO} to be $\frac{\partial^2}{\partial x \partial y}$, which can be approximated by $M_{R_b}^{-1}=D_n \otimes D_n$ (see \cref{eqn:M-R-B}), as $\paren{D_n \otimes D_n} \vect\paren{R_b} = \vect\paren{D_n R_b D_n^\top}$. \\

Overall, the referenced deconvolution method can be easily applied to any arbitrary reference $R$. In contrast, the HERALDO procedure can only be applied when $\mathcal{L}^n_R$ exists and can be constructed, which has to date only been demonstrated for special references with simple geometric shapes\textemdash so that the specific construction of $\mc L^n_R$ is straightforward. However, it may not be easily applicable to reference choices such as the annulus~\cite{Tais-annulus} or uniformly redundant array~\cite{URA}.

\section{Proof of \cref{thm:conc-bound}} \label{sec:append-pf}

 Let $\eta \doteq \norm{\wt{X} - X}{F}^2$. We are interested to control $\abs{\eta  - \bb E \eta}$. By similar manipulation as in \cref{lem:exp_error}, we have
 \begin{multline*}
     \norm{\wt{X} - X}{F}^2 - \bb E \norm{\wt{X} - X}{F}^2
 \\ = \innerprod{T_R^* T_R}{\brac{\vect\paren{\wt{Y}} - \vect\paren{Y}}\brac{\vect\paren{\wt{Y}} - \vect\paren{Y}}^* - \bb E \brac{\vect\paren{\wt{Y}} - \vect\paren{Y}}\brac{\vect\paren{\wt{Y}} - \vect\paren{Y}}^*}.
 \end{multline*}

Since entries in $\vect(\wt{Y}) - \vect(Y)$ have sub-exponential tails, we need Hanson-Wright type inequalities for sub-exponential random variables.
 \begin{theorem}[Proposition 1.1 of \cite{GoetzeEtAl2019Concentration}] \label{thm:subexp_chaos}
 Let $A \in \R^{n \times n}$ be a symmetric matrix and let $Z_1, \dots, Z_n$ be a set of independent random variables with $\bb E Z_i = 0$ and $\norm{Z_i}{\psi_1} \le M$ for all $i$ (here $\norm{\cdot}{\psi_1}$ is the sub-exponential norm as defined in, e.g., Definition 2.7.5. of~\cite{Vershynin2018High}). Write $Z \doteq [Z_1; \dots;Z_n]$ . For any $t> 0$,
 \begin{align}
     \prob{\abs{Z^\top A Z - \bb E Z^\top A Z }  \ge t} \le 2 \exp\paren{-c \min\set{\frac{t^2}{M^4 \norm{A}{F}^2}, \frac{t^{1/2}}{M \norm{A}{}^{1/2}}} }.
 \end{align}
 \end{theorem}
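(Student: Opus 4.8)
The plan is to deduce the stated tail bound from a single $L^p$-moment estimate for the quadratic chaos, followed by the standard moment-to-tail conversion. First I would reduce to the unit-norm case $M=1$: setting $W_i \doteq Z_i/M$ produces independent, mean-zero variables with $\norm{W_i}{\psi_1}\le 1$ and $Z^\top A Z = M^2\,W^\top A W$, so that $\prob{\abs{Z^\top A Z - \bb E Z^\top A Z}\ge t} = \prob{\abs{W^\top A W - \bb E W^\top A W}\ge t/M^2}$. The factors $M^4$ and $M$ in the two branches of the minimum then reappear automatically after substituting $t\mapsto t/M^2$. It therefore suffices to establish, for unit sub-exponential norm, the moment bound
\begin{align} \label{eq:prop-moment-target}
\norm{W^\top A W - \bb E W^\top A W}{L^p} \le C\paren{\sqrt{p}\,\norm{A}{F} + p^2\,\norm{A}{\op}}, \quad \forall\, p \ge 1,
\end{align}
for a universal constant $C$. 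Given \eqref{eq:prop-moment-target}, Markov's inequality applied to the $p$-th moment together with the optimal choice $p \asymp \min\set{t^2/\norm{A}{F}^2,\; t^{1/2}/\norm{A}{\op}^{1/2}}$ yields the asserted two-regime tail; this last step is routine.

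The substance is thus proving \eqref{eq:prop-moment-target}, and I would split $A=D+E$ into its diagonal part $D$ and off-diagonal part $E$ and treat each chaos separately. For the diagonal term $\sum_i A_{ii}\paren{W_i^2 - \bb E W_i^2}$, each summand is a centered independent variable whose square has $\psi_{1/2}$ (sub-Weibull of order $1/2$) tails, since $\norm{W_i^2}{\psi_{1/2}} \le \norm{W_i}{\psi_1}^2 \le 1$. A Bernstein-type moment estimate for sums of independent sub-Weibull$(1/2)$ variables then bounds this contribution by $C\paren{\sqrt{p}\,\paren{\sum_i A_{ii}^2}^{1/2} + p^2\max_i\abs{A_{ii}}}$, which is absorbed into the right-hand side of \eqref{eq:prop-moment-target} because $\sum_i A_{ii}^2 \le \norm{A}{F}^2$ and $\max_i\abs{A_{ii}} \le \norm{A}{\op}$.

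The off-diagonal chaos $\sum_{i\ne j} E_{ij} W_i W_j$ is where the real work lies. I would first invoke a decoupling inequality for quadratic forms to pass, at the cost of a universal constant, to the decoupled sum $\sum_{i,j} E_{ij} W_i W_j'$, where $W'$ is an independent copy of $W$. Conditioning on $W'$, this becomes the linear form $\innerprod{W}{EW'}$ in the independent sub-exponential variables $W_i$, whose conditional $L^p$ norm is controlled by the sub-exponential Bernstein inequality as $\lesssim \sqrt{p}\,\norm{EW'}{2} + p\,\norm{EW'}{\infty}$. It then remains to take $L^p$ norms over $W'$: the term $\norm{EW'}{2}$ concentrates near $\norm{E}{F}$ and yields the $\sqrt{p}\,\norm{E}{F}$ contribution, while each coordinate $\paren{EW'}_i$ is itself a sub-exponential linear form with $\ell_2$-coefficient norm at most $\norm{E}{\op}$, so controlling the higher moments of $\norm{EW'}{\infty}$ injects a second factor of $p$ and produces the $p^2\,\norm{E}{\op}$ term.

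The main obstacle is precisely this nested bookkeeping: tracking how the two successive applications of the sub-exponential Bernstein inequality compound the polynomial-in-$p$ growth, and confirming that it is the operator norm (not the Frobenius norm) that multiplies $p^2$. Once this is in hand, $\norm{E}{F}\le\norm{A}{F}$ and $\norm{E}{\op}\le 2\norm{A}{\op}$ let me recombine the diagonal and off-diagonal moment bounds into \eqref{eq:prop-moment-target}, completing the plan.
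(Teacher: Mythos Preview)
The paper does not prove this theorem. It is quoted verbatim as Proposition~1.1 of G\"otze et al.\ (the reference \texttt{GoetzeEtAl2019Concentration}) and used as a black box in the proof of \cref{thm:conc-bound}; there is no argument in the paper to compare your proposal against.

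That said, your outline is a sound sketch of how the cited result is actually established: normalize to $M=1$, target the moment bound $\norm{W^\top A W - \bb E W^\top A W}{L^p}\lesssim \sqrt{p}\,\norm{A}{F}+p^2\norm{A}{\op}$, split into diagonal and off-diagonal parts, handle the diagonal via a sub-Weibull$(1/2)$ Bernstein bound, and treat the off-diagonal chaos by decoupling followed by two nested applications of the sub-exponential Bernstein inequality. The one place where your sketch is thin is the step ``$\norm{EW'}{2}$ concentrates near $\norm{E}{F}$'': since the $W_j'$ are only sub-exponential, the squared norm $\norm{EW'}{2}^2$ is itself a quadratic form in heavy-tailed variables, and extracting the correct $\sqrt{p}\,\norm{E}{F}$ contribution (without an extra power of $p$ leaking in) requires a bit more care than you indicate---typically one bounds $\bigl\lVert\,\norm{EW'}{2}\,\bigr\rVert_{L^p}$ directly via a separate moment estimate rather than by ``concentration near $\norm{E}{F}$''. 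This is exactly the bookkeeping you flag as the main obstacle, and it is where the original reference does the real work.
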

 Note that
 \begin{align*}
     \innerprod{T_R^* T_R}{\brac{\vect\paren{\wt{Y}} - \vect\paren{Y}}\brac{\vect\paren{\wt{Y}} - \vect\paren{Y}}^* }
     = \brac{\vect\paren{\wt{Y}} - \vect\paren{Y}}^* \paren{T_R^* T_R} \brac{\vect\paren{\wt{Y}} - \vect\paren{Y}}.
 \end{align*}
 So, in our problem, the random vector $Z$ is $\vect\paren{\wt{Y}} - \vect\paren{Y}$
 \begin{align}
     Z_k \doteq \wt{Y}_{ij} - Y_{ij}.
 \end{align}
 By our Poisson model,
 \begin{align}
     \expect{\wt{Y}_{ij} - Y_{ij}} = Y_{ij} - Y_{ij} = 0.
 \end{align}
 Now we need to estimate the sub-exponential norm of a centered Poisson random variable.
 \begin{lemma}  \label{lemma:subexp_norm_poisson}
     Let $Z \sim \mathrm{Pois}\paren{\lambda}$. We have
     \begin{align}
         \norm{Z -\lambda}{\psi_1} \le C \paren{\sqrt{3\lambda} + 2},
     \end{align}
     where $C > 0$ is a universal constant.
 \end{lemma}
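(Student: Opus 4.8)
The plan is to avoid the raw Orlicz definition entirely and instead work through the moment generating function (MGF) of the centered Poisson variable, which is available in closed form, combined with the standard equivalence between the sub-exponential norm and a two-sided MGF bound for \emph{mean-zero} random variables (one of the equivalent characterizations collected in Proposition 2.7.1 of \cite{Vershynin2018High}). Since the lemma only claims a bound up to a universal constant $C$, I may freely pass between these equivalent definitions of $\norm{\cdot}{\psi_1}$ at the cost of absolute constants. Throughout I set $W \doteq Z - \lambda$, so that $\bb E W = 0$, which is exactly the hypothesis needed for the mean-zero characterization.

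First I would record the exact MGF. From the Poisson MGF $\bb E \exp(sZ) = \exp\paren{\lambda\paren{e^s - 1}}$ one gets, for all $s \in \R$,
\[
\bb E \exp(sW) = \exp\paren{\lambda\paren{e^s - 1 - s}}.
\]
Next I would bound the cumulant $e^s - 1 - s$ by the elementary estimate $e^s - 1 - s \le \tfrac12 s^2 e^{\abs s}$ (via the integral form $e^s-1-s=\int_0^s (s-u)e^u\,du$ and its analogue for $s<0$). Restricting to $\abs s \le 1$ then yields $e^s - 1 - s \le \tfrac{e}{2}s^2$, so that
\[
\bb E \exp(sW) \le \exp\paren{\tfrac{e}{2}\lambda s^2}, \qquad \abs s \le 1.
\]

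I would then close the argument by choosing $K \doteq \sqrt{3\lambda} + 2$. Because $K \ge 2 > 1$, every $s$ with $\abs s \le 1/K$ also satisfies $\abs s \le 1$, and since $K^2 \ge 3\lambda \ge \tfrac{e}{2}\lambda$, on this range $\tfrac{e}{2}\lambda s^2 \le K^2 s^2$, giving $\bb E \exp(sW) \le \exp\paren{K^2 s^2}$ for all $\abs s \le 1/K$. By the mean-zero MGF characterization this forces $\norm{W}{\psi_1} \le C K = C\paren{\sqrt{3\lambda} + 2}$ for a universal constant $C$, which is the claim.

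The hard part is conceptual rather than computational: one is tempted to start from the Orlicz definition $\norm{W}{\psi_1} = \inf\set{t>0 : \bb E \exp(\abs W/t) \le 2}$ and bound $\exp(\abs W/t) \le \exp(W/t) + \exp(-W/t)$, but this fails because both exponents are nonnegative, so the two terms each exceed $1$ and their sum is always $\ge 2$ — the symmetrization over-counts near the origin, where $\exp(\abs W/t) \approx 1 + \abs W/t$ while $\exp(\pm W/t)$ contribute roughly $2$. The essential trick is therefore to exploit that $W$ is centered and use the two-sided MGF bound, which is equivalent to $\norm{\cdot}{\psi_1}$ precisely up to the universal constant the statement already allows. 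Beyond that, the only quantitative ingredients are the cumulant inequality $e^s - 1 - s \le \tfrac12 s^2 e^{\abs s}$ and the arithmetic $\tfrac{e}{2}\lambda \le 3\lambda \le K^2$ on $\abs s \le 1/K$, both routine.
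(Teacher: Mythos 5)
Your proof is correct, but it takes a genuinely different route from the paper's. The paper starts from Cram\'er--Chernoff tail bounds for the Poisson distribution, relaxes the Bennett-type exponent via $h(u) \ge \frac{u^2}{2(1+u/3)}$ to get a Bernstein-type two-sided tail bound, then integrates the tails to control the moments, obtaining $\norm{Z-\lambda}{L_p} \le 5\paren{\sqrt{3\lambda}+2}p$, and finally converts to the $\psi_1$ norm via the moment-growth characterization (property (b) of Proposition 2.7.1 in \cite{Vershynin2018High}). You instead exploit the fact that the centered Poisson MGF is available in closed form, $\bb E \exp\paren{s(Z-\lambda)} = \exp\paren{\lambda(e^s-1-s)}$, bound the cumulant by the elementary inequality $e^s - 1 - s \le \tfrac12 s^2 e^{\abs{s}}$, and then invoke the mean-zero MGF characterization (property (e) of the same Proposition 2.7.1); your arithmetic checks out, since $K \doteq \sqrt{3\lambda}+2 \ge 2$ guarantees $\abs{s}\le 1/K$ implies $\abs{s} \le 1$, and $\tfrac{e}{2}\lambda \le 3\lambda \le K^2$. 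Your route is shorter and cleaner: it skips the tail bound and the moment-integration/Gamma-function estimates entirely, and both arguments ultimately lean on the same equivalence theorem, just through different items. What the paper's longer route buys is a pair of reusable intermediate facts---the explicit Bernstein-type tail bound and the explicit $L_p$ moment growth---and a template that still works when the MGF is not known in closed form; what yours buys is economy, made possible precisely because the Poisson MGF is exact. Your side remark about why the naive symmetrization $\exp(\abs{W}/t) \le \exp(W/t)+\exp(-W/t)$ cannot directly certify the Orlicz-norm condition $\bb E \exp(\abs{W}/t)\le 2$ is also accurate, and correctly identifies why passing through an equivalent characterization (at the cost of the universal constant $C$ the lemma already permits) is the right move.
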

 \begin{proof}
 Applying the Cramer-Chernoff method to $Z -\lambda$, we obtain that (see, e.g., Page 23 of~\cite{BoucheronEtAl2013Concentration})
 \begin{align}
     \prob{Z - \lambda > t} &\le \exp\set{-\lambda h\paren{t/\lambda}}  & & \quad \forall\; t \ge 0, \\
     \prob{Z - \lambda < -t} & \le \exp\set{-\lambda h\paren{-t/\lambda}}  & & \quad \forall \; t \in [0, \lambda],
 \end{align}
 where $h(u) \doteq \paren{1+u}\log (1+u) - u$. Using that $h(u) \ge \frac{u^2}{2\paren{1+u/3}}$, we can write the above results collectively as
 \begin{align}
     \prob{\abs{Z - \lambda} > t} \le 2 \exp\paren{-\frac{t^2}{2\lambda + 2t/3}} \quad \forall \; t> 0.
 \end{align}
 Now we will estimate the sub-exponential norm by upper bounding the moments $\bb E \abs{Z - \lambda}^p$. We have
 \begin{align}
     \bb E \abs{Z - \lambda}^p
     =\; & \int_{0}^{\infty} \prob{\abs{Z - \lambda}^p \ge u}\; du \\
     =\; & \int_{0}^{\infty} \prob{\abs{Z - \lambda} \ge t} p t^{p-1}\; dt \\
     \le\; & 2p \int_{0}^{\infty} \exp\paren{-\frac{t^2}{2\lambda + 2t/3}}  t^{p-1}\; dt \\
     =\;& 2p \int_{0}^{\frac{3}{2} \lambda} \exp\paren{-\frac{t^2}{2\lambda + 2t/3}}  t^{p-1}\; dt + 2p \int_{\frac{3}{2} \lambda}^{\infty} \exp\paren{-\frac{t^2}{2\lambda + 2t/3}}  t^{p-1}\; dt \\
     \le\; & 2p \int_{0}^{\frac{3}{2} \lambda} \exp\paren{-\frac{t^2}{3\lambda}}  t^{p-1}\; dt  + 2p \int_{\frac{3}{2} \lambda}^{\infty} \exp\paren{-\frac{t}{2}}  t^{p-1}\; dt \; \\
     \le\; & 2p \int_{0}^{\infty} \exp\paren{-\frac{t^2}{3\lambda}}  t^{p-1}\; dt  + 2p \int_{0}^{\infty} \exp\paren{-\frac{t}{2}}  t^{p-1}\; dt \\
     \le\; & \paren{3\lambda}^{p/2} p \Gamma \paren{p/2} + 2^{p+1} p \Gamma(p)
     \le   \paren{3\lambda}^{p/2} p \paren{p/2}^{p/2} + 2^{p+1} p^{p+1},
 \end{align}
 where we used $\Gamma(x) \le x^x$ to obtain the very last bound. Thus,
 \begin{align}
     \norm{Z - \lambda}{L_p} = \paren{\bb E \abs{Z - \lambda}^p}^{1/p} \le 5 \paren{\sqrt{3\lambda}+2} p.
 \end{align}
 We obtain the claimed result by connecting the above moment bound with the definition of sub-exponential norm, see, e.g., Proposition 2.7.1 of~\cite{Vershynin2018High}.
 \end{proof}

 We are ready now to state the concentration of the empirical (squared) error around the expectation.
 \begin{theorem}
 For any of the three special (i.e., pinhole, slit, block) references, the following holds: for all $t > 0$,
 \begin{multline}
     \prob{    \abs{\norm{\wt{X} - X}{F}^2 - \bb E \norm{\wt{X} - X}{F}^2} \ge t} \\
     \le 2\exp\paren{-c\min\set{\frac{N_p m t^{1/2}}{\sqrt{3N_p \norm{Y}{1} \norm{Y}{\infty}} + 2\norm{Y}{1}} , \frac{N_p^4 m^4 t^2}{n^2 \paren{\sqrt{3N_p \norm{Y}{1} \norm{Y}{\infty}} + 2\norm{Y}{1}}^4 }}}.
 \end{multline}
 Here $c>0$ is a universal constant.
 \end{theorem}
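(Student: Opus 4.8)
The plan is to combine the three ingredients already assembled: the reduction of the centered fluctuation to a quadratic chaos, the sub-exponential Hanson--Wright inequality (\cref{thm:subexp_chaos}), and the sub-exponential norm estimate for a centered Poisson variable (\cref{lemma:subexp_norm_poisson}). The two pieces that remain are (i) to turn the per-pixel Poisson estimate into a uniform bound $M$ on the sub-exponential norms of the coordinates of $Z \doteq \vect(\wt{Y}) - \vect(Y)$, and (ii) to bound the Frobenius and operator norms of the matrix $A \doteq T_R^* T_R$ appearing in the chaos, uniformly over the three special references. Substituting these into \cref{thm:subexp_chaos} and absorbing all numerical constants into a single $c$ will yield the stated tail bound.

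First I would settle a small real-vs-complex point. Since $Y = |\wh{[X,R]}|^2$ and $\wt{Y}$ are real, $Z$ is a real random vector with independent, centered coordinates, and $Z^* (T_R^* T_R) Z = Z^\top \Re(T_R^* T_R) Z$, because the antisymmetric imaginary part of the Hermitian matrix $T_R^* T_R$ contributes nothing to a real quadratic form. Thus \cref{thm:subexp_chaos} applies with the real symmetric matrix $A = \Re(T_R^* T_R)$, and passing to the real part only decreases norms, so $\norm{A}{\op} \le \norm{T_R^* T_R}{\op}$ and $\norm{A}{F} \le \norm{T_R^* T_R}{F}$. Next, for the sub-exponential norm: by the Poisson model of \cref{eqn:data} each coordinate equals $Z_k = \frac{\norm{Y}{1}}{N_p}\paren{\mathrm{Pois}(\lambda) - \lambda}$ with $\lambda = \frac{N_p}{\norm{Y}{1}} Y_{ij} \le \frac{N_p}{\norm{Y}{1}}\norm{Y}{\infty}$. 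Applying \cref{lemma:subexp_norm_poisson} together with the homogeneity of the $\psi_1$-norm, and bounding $\lambda$ by its worst pixel value, gives a uniform estimate $\norm{Z_k}{\psi_1} \le M$ with $M = \frac{C}{N_p}\paren{\sqrt{3 N_p \norm{Y}{1}\norm{Y}{\infty}} + 2\norm{Y}{1}}$; this is precisely, up to the constant $C$, the quantity $\frac{1}{N_p}\paren{\sqrt{3N_p\norm{Y}{1}\norm{Y}{\infty}} + 2\norm{Y}{1}}$ appearing in the final bound.

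The main work --- and the step I expect to be most delicate --- is the uniform norm control of $A$. For each of the three references the proofs of \cref{thm:holog-ref-err,thm:slit-ref-err,thm:block-ref-err} exhibit $T_R = \frac{1}{m^2} B_1 \otimes B_2$, so that $T_R^* T_R = \frac{1}{m^4}(B_1^* B_1) \otimes (B_2^* B_2)$, and both the operator and Frobenius norms are multiplicative under Kronecker products. Each factor $B$ is either of the form $P F^*$ (a submatrix of a DFT matrix) or $D_n P F^*$. In the first case $B B^* = m I_n$, so $B$ has exactly $n$ singular values equal to $\sqrt{m}$; in the second case $B$ still has rank at most $n$, and since $\norm{D_n}{\op} \le 2$ each singular value is at most $2\sqrt{m}$. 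Consequently each $B^* B$ has at most $n$ nonzero eigenvalues, each $O(m)$, giving $\norm{B^* B}{\op} = O(m)$ and $\norm{B^* B}{F} = O(\sqrt{n}\, m)$. Multiplying the two factors and dividing by $m^4$ yields, uniformly over the three references, $\norm{T_R^* T_R}{\op} = O(1/m^2)$ and $\norm{T_R^* T_R}{F} = O(n/m^2)$.

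Feeding $\norm{A}{\op} \le c_1/m^2$ and $\norm{A}{F} \le c_2 n/m^2$ together with $M$ into the two exponents of \cref{thm:subexp_chaos} reproduces, after collecting constants, the terms $\frac{N_p m t^{1/2}}{\sqrt{3N_p\norm{Y}{1}\norm{Y}{\infty}} + 2\norm{Y}{1}}$ and $\frac{N_p^4 m^4 t^2}{n^2\paren{\sqrt{3N_p\norm{Y}{1}\norm{Y}{\infty}} + 2\norm{Y}{1}}^4}$, completing the proof. The delicate points are verifying the singular-value structure of the $D_n P F^*$ factors, so that one obtains uniform $O(m)$ and $O(\sqrt{n}\, m)$ bounds rather than reference-specific constants, and confirming that the worst-pixel bound on $\lambda$ produces exactly the stated combination of $\norm{Y}{1}$ and $\norm{Y}{\infty}$ inside $M$.
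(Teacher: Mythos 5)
Your proposal is correct and follows essentially the same route as the paper's own proof: the identical reduction of the centered error to a quadratic chaos in $Z = \vect(\wt{Y})-\vect(Y)$, the same sub-exponential Hanson--Wright inequality and Poisson $\psi_1$-estimate producing the same uniform bound $M = \frac{C}{N_p}\paren{\sqrt{3N_p\norm{Y}{1}\norm{Y}{\infty}}+2\norm{Y}{1}}$, and Kronecker-product multiplicativity giving $\norm{T_R^*T_R}{}=O(1/m^2)$ and $\norm{T_R^*T_R}{F}=O(n/m^2)$ uniformly over the three references. The only differences are minor bookkeeping: you bound each Gram factor's Frobenius norm by rank times operator norm whereas the paper computes the DFT Gram factor exactly ($m^2\norm{P_2P_2^*}{F}\norm{P_1P_1^*}{F}=m^2 n$) and pulls out $\norm{D_n}{}\le 2$, and your explicit reduction of the complex Hermitian form to the real symmetric matrix $\Re(T_R^*T_R)$ is a care point the paper leaves implicit.
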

 \begin{proof}
 By our Poisson noise model and \cref{lemma:subexp_norm_poisson} ,
 \begin{align}
     \norm{Y_{ij} - \wt{Y}_{ij}}{\psi_1} \le C \frac{\norm{Y}{1}}{N_p} \paren{\sqrt{\frac{3N_p}{\norm{Y}{1}} Y_{ij}} + 2}.
 \end{align}
 So in applying \cref{thm:subexp_chaos}, we can take $M =  C \frac{\norm{Y}{1}}{N_p} \paren{\sqrt{\frac{3N_p}{\norm{Y}{1}} \norm{Y}{\infty}} + 2}$.

 Now we estimate $\norm{T_R^* T_R}{} = \norm{T_R}{}^2$, and $\norm{T_R^* T_R}{F}$. Note the fact that for any two matrices $A, B$, $\norm{A \otimes B}{} = \norm{A}{} \norm{B}{}$, $\norm{A \otimes B}{F} = \norm{A}{F} \norm{B}{F}$, and $\norm{AB}{F} \le \norm{A}{} \norm{B}{F}$. First, we have the following estimates
 \begin{align}
   \norm{\paren{P_2 F_{RA}^*} \otimes \paren{P_1 F_{LA}^*} }{}
   = \norm{P_2 F_{RA}^*}{} \norm{P_1 F_{LA}^*}{}
   = \sqrt{m} \times \sqrt{m} = m,
 \end{align}
 and
 \begin{align}
   \norm{\brac{\paren{P_2 F_{RA}^*} \otimes \paren{P_1 F_{LA}^*}}^* \brac{\paren{P_2 F_{RA}^*} \otimes \paren{P_1 F_{LA}^*}}}{F}
   & = \norm{\brac{\paren{P_2 F_{RA}^*} \otimes \paren{P_1 F_{LA}^*}} \brac{\paren{P_2 F_{RA}^*} \otimes \paren{P_1 F_{LA}^*}}^*}{F} \nonumber \\
   & = \norm{\brac{\paren{P_2 F_{RA}^*} \otimes \paren{P_1 F_{LA}^*}} \brac{\paren{ F_{RA} P_2^*} \otimes \paren{F_{LA} P_1^*}}}{F} \nonumber \\
   & = \norm{\paren{P_2 F_{RA}^* F_{RA} P_2^*} \otimes \paren{P_1 F_{LA}^* F_{LA} P_1^*}}{F} \nonumber \\
   & = m^2\norm{P_2 P_2^*}{F} \norm{P_1 P_1^*}{F}
   =  m^2n.
 \end{align}

 So specializing to the references, we have

 \begin{itemize}

 \item For the pinhole reference,
 \begin{align}
     \norm{T_{R_p}}{}
     & = \frac{1}{m^2} \norm{\paren{P_2 F_{RA}^*} \otimes \paren{P_1 F_{LA}^*} }{}
     = \frac{1}{m}, \\
     \norm{T_{R_p}^* T_{R_p}}{F}
     & \le \frac{1}{m^4} \norm{\brac{\paren{P_2 F_{RA}^*} \otimes \paren{P_1 F_{LA}^*}}^* \brac{\paren{P_2 F_{RA}^*} \otimes \paren{P_1 F_{LA}^*}}}{F}
     = \frac{m^2n}{m^4} = \frac{n}{m^2}.
 \end{align}

 \item For the slit reference,
 \begin{align}
     \norm{T_{R_s}}{}
     \le \frac{1}{m^2} \norm{D_n}{}  \norm{\paren{P_2 F_{RA}^*} \otimes \paren{P_1 F_{LA}^*} }{}
     \le \frac{2}{m},
 \end{align}
 where we used $\norm{D_n}{} \le 2$. Moreover,
 \begin{align}
     \norm{T_{R_s}^* T_{R_s}}{F}
     \le \frac{1}{m^4} \norm{D_n}{}^4 \norm{\brac{\paren{P_2 F_{RA}^*} \otimes \paren{P_1 F_{LA}^*}}^* \brac{\paren{P_2 F_{RA}^*} \otimes \paren{P_1 F_{LA}^*}}}{F}
     \le \frac{4n}{m^2}.
 \end{align}

 \item For the block reference,
 \begin{align}
     \norm{T_{R_b}}{}
     & \le \frac{1}{m^2} \norm{D_n}{}^2 \norm{\paren{P_2 F_{RA}^*} \otimes \paren{P_1 F_{LA}^*} }{} \le \frac{4}{m},  \\
     \norm{T_{R_b}^* T_{R_b}}{F}
     & \le \frac{1}{m^4 } \norm{D_n}{}^2  \norm{\brac{\paren{P_2 F_{RA}^*} \otimes \paren{P_1 F_{LA}^*}}^* \brac{\paren{P_2 F_{RA}^*} \otimes \paren{P_1 F_{LA}^*}}}{F}
     \le \frac{16n}{m^2}.
 \end{align}
 \end{itemize}
 So for all the three references, we have
 \begin{align}
     \norm{T_R^* T_R}{} \le \frac{16}{m^2}, \quad
     \norm{T_R^* T_R}{F} \le \frac{16n}{m^2}.
 \end{align}
 Substituting the estimates into \cref{thm:subexp_chaos}, we obtain the claimed result.
 \end{proof}


\bibliographystyle{amsalpha}
\bibliography{BREF}

\end{document}